\newcommand{\etal}{\textit{et al.}}
\newcommand{\distas}[1]{\mathbin{\overset{#1}{\kern\z@\sim}}}%
\newsavebox{\mybox}\newsavebox{\mysim}
\newcommand{\distras}[1]{%
  \savebox{\mybox}{\hbox{\kern3pt$\scriptstyle#1$\kern3pt}}%
  \savebox{\mysim}{\hbox{$\sim$}}%
  \mathbin{\overset{#1}{\kern\z@\resizebox{\wd\mybox}{\ht\mysim}{$\sim$}}}%
}
\newcommand*\rel@kern[1]{\kern#1\dimexpr\macc@kerna}
\newcommand*\widebar[1]{%
  \begingroup
  \def\mathaccent##1##2{%
    \rel@kern{0.8}%
    \overline{\rel@kern{-0.8}\macc@nucleus\rel@kern{0.2}}%
    \rel@kern{-0.2}%
  }%
  \macc@depth\@ne
  \let\math@bgroup\@empty \let\math@egroup\macc@set@skewchar
  \mathsurround\z@ \frozen@everymath{\mathgroup\macc@group\relax}%
  \macc@set@skewchar\relax
  \let\mathaccentV\macc@nested@a
  \macc@nested@a\relax111{#1}%
  \endgroup
}
\newtheorem{definition}{Definition}
\newtheorem{remark}{Remark}
\newtheorem{assumption}{Assumption}
\newtheorem{theorem}{Theorem}
\newtheorem{lemma}{Lemma}
\renewcommand{\algorithmiccomment}[1]{// \quad #1}
\begin{document}
\title{Online Influence Maximization with Semi-Bandit Feedback under Corruptions}
\author{Xiaotong Cheng, Behzad Nourani-Koliji, Setareh Maghsudi\thanks{
X. Cheng is with the Department of Electrical Engineering and Information Technology, Ruhr-University Bochum, 44801 Bochum, Germany (email:xiaotong.cheng@ruhr-uni-bochum.de). B. Nourani-Koliji is with the Department of Computer Science, University of Tübingen, 72074 Tübingen, Germany (email:behzad.nourani-koliji@uni-tuebingen.de). S. Maghsudi is with the Department of Electrical Engineering and Information Technology, Ruhr-University Bochum, 44801 Bochum, Germany and with the Fraunhofer Heinrich Hertz Institute, 10587 Berlin, Germany (email:setareh.maghsudi@rub.de).}\\
}
\markboth{Journal of \LaTeX\ Class Files,~Vol.~14, No.~8, August~2021}%
{Shell \MakeLowercase{\textit{et al.}}: A Sample Article Using IEEEtran.cls for IEEE Journals}
\maketitle
\begin{abstract}
In this work, we investigate the online influence maximization in social networks. Most prior research studies on online influence maximization assume that the nodes are fully cooperative and act according to their stochastically generated influence probabilities on others. In contrast, we study the online influence maximization problem in the presence of some corrupted nodes whose damaging effects diffuse throughout the network. We propose a novel bandit algorithm, CW-IMLinUCB, which robustly learns and finds the optimal seed set in the presence of corrupted users. Theoretical analyses establish that the regret performance of our proposed algorithm is better than the state-of-the-art online influence maximization algorithms. Extensive empirical evaluations on synthetic and real-world datasets also show the superior performance of our proposed algorithm.  
\end{abstract}
\begin{IEEEkeywords}
Contextual bandit, influence maximization, sequential decision-making.
\end{IEEEkeywords}
%
\section{Introduction}
\label{sec:intro}
In the last decade, social networks have played critical roles in the analysis and optimization of data in epidemiology, marketing, and economics \cite{easley2010networks, cesa2013gang, valko2016bandits}, as a result of information propagation or diffusion that is inherent in such networks. That has led to developing frameworks such as Influence Maximization (IM), where companies aim to select a fixed number of customers that greatly influence others, called seeds or source nodes, to receive reimbursement in return for advertising their products \cite{wen2017online,vaswani2017model,Stein2017HeuristicAF}. The companies aim at maximizing the influence spread given a limited budget. 

Online social networks enable frequent information collection and updating of information regarding users' connections and interactions, which simplifies IM. In most solutions, a social network is modeled as a graph, and users as nodes. The edges represent the users' relations and the edge weights represent influence probabilities between users. Influence propagates through the network under a specific diffusion model. The independent cascade (IC) model and linear threshold (LT) model are the two most widely used models \cite{wu2019factorization,li2020online}. In the IC model, an adopter user has an activation probability, i.e., to convince each neighbor to adopt the product. The activation probabilities between pairs of users are independent. In the LT model, a user adopts the product only if the aggregated influence from its neighbors reaches a threshold \cite{kempe2003maximizing}. The IC model is particularly well-known and frequently studied, especially in the context of online influence maximization  \cite{wen2017online,wu2019factorization}. Therefore, in this paper, we focus on the IM problem within the framework of the IC model.

In the offline IM problem, the network structure and edge weights are known in advance \cite{kempe2003maximizing}. However, in real applications, even if the network topology is accessible, the influence probabilities are unknown a priori. That highlights the importance of online influence maximization (OIM) problem \cite{vaswani2017model, wen2017online, wu2019factorization, lei2015online}. In OIM, the activation probability is unknown and needs to be estimated by a learner through directly interacting with the network.

The researchers have studied the OIM problem from many perspectives \cite{wen2017online,vaswani2017model,wu2019factorization,li2020online,zuo2022online}; Nevertheless, they mostly assume that all users in the social networks are fully cooperative and influence others voluntarily and automatically, which ignores the adverse effects of potentially corrupted users/nodes as a critical factor. However, in real-world applications, malicious users trick the system with disputed behaviors. Even if not selected as seeds, they can spread corruption effects throughout the system by disrupting the information flow. Hence, it is imperative to develop an algorithm to address this challenge in influence maximization. 

Before describing our contributions, we motivate our settings with several examples. Nowadays, customers heavily lean on online reviews to guide their purchasing decisions. These reviews extend beyond traditional online shopping platforms like Amazon. They also play a significant role in invisible marketing, where brands collaborate with influencers to integrate the products into their content. However, not all reviews are persuasive; sometimes, they can have a subtle counterproductive effect \cite{liu2002interactivity,fitzsimons2004reactance,dore2018controversial}, e.g., when one uses humor or puns to subtly highlight a product's potential weaknesses, thus reducing the followers' enthusiasm to purchase that product. Another example is when the influencers adopt the comparative method for recommendation, which draws the customers' attention to similar products. Overemphasis of the products is another instance \cite{burgoon2002revisiting}. Such behaviors, malicious or not, impact the activation probabilities. In all these cases, most activation probabilities still follow a predictable pattern, whereas a fraction of them are corrupted under arbitrary patterns and are not identically distributed over time.

The proposed corruption-robust IM is not a straightforward extension of the previous work on corruption-robust bandit algorithms. While the concept of corruption-robust bandit algorithms is not new in the research of linear bandits, its application within the online influence maximization remains unexplored. Although each user in OIM can represent an arm in the bandit setting, it cannot be directly generalized to combinatorial setting since the seeds are not selected in isolation as the users mutually affect each other according to the social network model. Besides, in OIM, the reward is not a linear function of the outcomes obtained from each selected seed and has a more complicated structure. It involves the cascading feedback model and limited feedback information (binary feedback). Furthermore, IM introduces a unique aspect where the impact of corruption can also propagate throughout the entire network, which makes the problem even more challenging. Additionally, the offline IM with given graph and activation probability information is an NP-hard problem. 

In this work, we develop a novel OIM framework within a social network with several corrupted users to fill the gap of OIM problem under corruption. We summarize our main contributions as follows.
\begin{itemize}
    \item We propose a novel algorithm for corruption-robust OIM, titled CW-IMLinUCB, which builds on an OIM algorithm with a corruption-robust linear bandit algorithm \cite{he2022nearly}. By integrating the weighted regression into an OIM algorithm, our proposal alleviates the problems arising from inaccurate estimations caused by corrupted users.
    \item We theoretically demonstrate that our proposed CW-IMLinUCB algorithm achieves the regret guarantee $O(dBE^*\sqrt{T}\log(nT) + BE^*E^cCd\log(nT))$ while being robust to malicious behaviors. 
    \item Extensive experiments on synthetic and real-world datasets show the superior performance of our algorithm compared with the existing methods.
\end{itemize}
The rest of this paper is organized as follows: Section~\ref{sec:related-work} summarizes the related work. In Section~\ref{sec:problem}, we formulate the online influence maximization problem under corruption. Section~\ref{sec:cwb} introduces our proposed algorithm CW-IMLinUCB. Section~\ref{sec:ta} presents the theoretical analysis of CW-IMLinUCB and Section~\ref{sec:experiments} demonstrates the experimental results. Finally, Section~\ref{sec:conclusion} concludes this work.

\section{Related Work}\label{sec:related-work}
Our work is related to IM and bandits with adversarial corruptions.

\subsection{IM Related Work} 
Influence maximization is first investigated as an algorithmic problem in \cite{richardson2002mining}.
Reference \cite{kempe2003maximizing} formulates the influence maximization problem as a discrete optimization problem and proves the problem to be NP-hard. A greedy approximation algorithm is proposed and shown to be effective for both IC and LT models. The efficiency of this greedy algorithm is further improved in \cite{chen2009efficient}. Besides, reference \cite{gao2020fair} extends the IM problem to competitive influence maximization across multiple social events. In \cite{khatri2023cks}, the proposed approach solves the IM problem by identifying community bridge nodes and select them as seed set. Aforementioned work considers the offline IM problem setting, where the network structure and edge weights are known in advance \cite{kempe2003maximizing}. However, in realistic scenarios, even if the topology of the social network might be known, via Facebook, or Twitter, etc, the influence probabilities are unknown apriori. This highlights the importance of online influence maximization (OIM) problem \cite{vaswani2015influence, vaswani2017model, wen2017online, wu2019factorization, lei2015online}. 

The framework of OIM problem can be formulated as a variation of combinatorial multi-armed bandit (CMAB) problem, where the learning agent selects several base arms, defined as super arm at each round and tries to maximize its cumulative reward \cite{chen2013combinatorial,chen2016combinatorial}. References \cite{chen2013combinatorial,chen2016combinatorial} first use the CMAB framework to solve the OIM problem and develop an `Upper Confidence Bound (UCB)'-like algorithm based on the IC model and analyze the regret bound. In \cite{wen2017online}, the authors consider the OIM problem with an independent cascade semi-bandit (ICSB) model. They propose the linear generalization model of the activation probability and prove regret bounds assuming edge-semi bandit feedback. Reference \cite{vaswani2017model} suggests a different parameterization for the IM problem concerning pairwise reachability probabilities regardless of the underlying diffusion models. In \cite{wu2019factorization}, the authors factorize the activation probability on the edges into two latent factors. They use an IC model to estimate the influence parameters at the node level. There are a few works investigating OIM under different diffusion models. Reference \cite{li2020online} presents the OIM problem under the LT model. Wu \etal \cite{wu2020evolving} address the non-stationarity in an evolving underlying social network whose nodes and edges change over time. Reference \cite{zuo2022online} introduce competitive concept into OIM problem and extend the classical IC model to multi-item diffusion model. Authors in \cite{10.5555/3545946.3598895} study the OIM problem under a decreasing cascade model, which is also a variation of IC model with consideration of market saturation. Similar to \cite{wen2017online,wu2019factorization,zuo2022online,10.5555/3545946.3598895}, we use an IC model for influence propagation under edge-level feedback, while the corruption in diffusion is also considered. 

\subsection{Bandits with Corruption Related Work} Reference \cite{lykouris2018stochastic} extends the classic stochastic multi-armed bandit problem by allowing for corrupted feedback and developing a decision-making strategy whose regret is proportional to the total corruption at each round. In \cite{gupta2019better}, the authors propose an algorithm for a similar setting, whose regret is the summation of two terms: a corruption-independent term that matches the regret of the seminal multi-armed bandit algorithm and a time-independent term that is linear in the total corruption. For the corrupted stochastic linear bandit setting, Li \etal \cite{li2019stochastic} present an algorithm with an instance-dependent regret bound. For the same problem, the algorithm in \cite{bogunovic2021stochastic} achieves a regret with a corruption term that is linear in the total corruption. Zhao \etal \cite{zhao2021linear} develop a variance-aware algorithm based on the `Optimism in the Face of Uncertainty Linear bandit (OFUL)' algorithm \cite{abbasi2011improved}. Reference \cite{he2022nearly} also proposes a computationally efficient algorithm based on OFUL, by incorporating a weighted ridge regression that prevents using the contexts whose rewards might be corrupted. Wang \etal \cite{wang2023online} extend the work of \cite{he2022nearly} by considering the online clustering bandits problem with corrupted users. Besides, an algorithm is proposed to identify such users. In our work, we focus on the weighted ridge regression approach utilized in \cite{he2022nearly,wang2023online}. 

\section{Problem Setup}
\label{sec:problem}
In this section, we formulate an OIM problem under corruption with bandit feedback, illustrated in Figure~\ref{fig:oim-cor}. In such a problem, the final corruption effect depends on the position of the corrupted users. Indeed, a higher probability of activating corrupted users increases the corruption level in the system. Sometimes, even a tiny perturbation by a corrupted node in some time intervals changes the seed set entirely, thereby failing a corruption-agnostic agent. That aspect differs fundamentally from previous work on corruption \cite{he2022nearly,wang2023online}. 
\begin{figure}[!htp]
  \centering
  \includegraphics[width=0.99\linewidth]{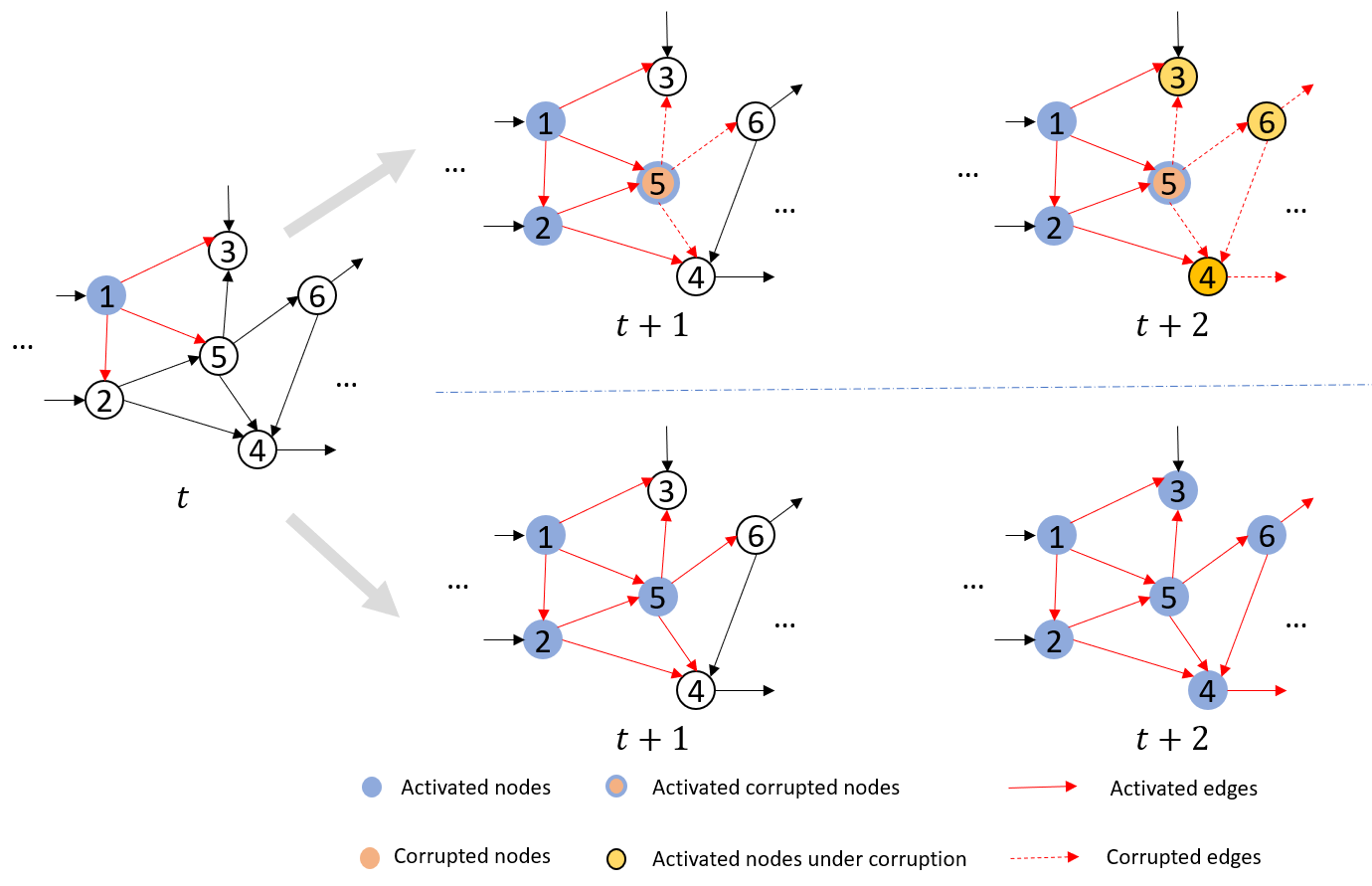}
  \caption{Online influence maximization under corruption. \\ Edges between users represent potential pathways for influence propagation. At $t$, when User 1 is activated, all of its out-edges trigger the activation of connected users. However, if User 5 is a corrupted user with unpredictable behavior, the outcomes at $t+1$ can vary. With some unknown probability, User 5 may behave normally (depicted in blue) or act adversarially (depicted in orange). User 5's corrupted behavior does not only perturb the influence diffusion when one selects it as a seed but also interferes with the influence diffusion when Users 1 or 2 are seeds, as user 5 lies within their diffusion pathways.}
  \label{fig:oim-cor}
\end{figure} 
\subsection{Notation}
We use boldface lowercase letters and boldface uppercase letters to represent vectors and matrices respectively. For example, $\norm{\boldsymbol{x}}_p$ denotes the $p-$norm of a vector $\boldsymbol{x}$. For a symmetric positive semi-definite matrix $\boldsymbol{A} \in \mathbb{R}^{d \times d}$, the weighted $2-$norm of vector $\boldsymbol{x} \in \mathbb{R}^d$ is defined by $\norm{\boldsymbol{x}}_{\boldsymbol{A}} = \sqrt{\boldsymbol{x}^T\boldsymbol{A}\boldsymbol{x}}$. The inner product is denoted by $\langle \cdot, \cdot \rangle$ and the weighted inner-product $\langle \boldsymbol{x},\boldsymbol{y} \rangle_{\boldsymbol{A}} = \boldsymbol{x}^T \boldsymbol{A} \boldsymbol{y}$. \textbf{Table~\ref{tab:nota}} summarizes the definitions and notations.
\begin{table}[!ht]
\begin{center}
\centering
\captionsetup{justification=centering}
\caption{Notation}
\label{tab:nota}
 \begin{tabular}{c p{7cm}}
 \hline
 \multicolumn{2}{l}{\textbf{Problem-specific notations}} \\
 \hline
 $n$ & Number of users \\
 $m$ & Number of edges \\
 $\mathcal{G}$ & Graph that models the social network \\
 $\mathcal{V}$ & User set \\
 $\mathcal{E}$ & Edge set \\
 $\mathcal{S}_t$ & Seed set selected at $t$\\
 $K$ & Budget of seed set \\
 $p(e)$ & Activation probability of edge $e$ \\
 $d$ & Dimension of feature vectors \\
 $T$ & Total number of rounds \\
 $\boldsymbol{x}_e$ & Feature vector of edge $e$ \\
 $\boldsymbol{\theta}$ & Unknown feature vector \\
 $c_{u,t}$ & corruption level of node $u$ at $t$ \\
 $C_u$ & Total corruption budget of node $u$ \\
 $C$ & The maximum corruption budget \\
 $\omega_{e,t}$ & Weight coefficient of edge $e$ \\
 $\boldsymbol{M}_t$ & Gram matrix \\
 $\boldsymbol{b}_t$ & Vector that summarizes the past propagations \\
 $\hat{\boldsymbol{\theta}}$ & Estimation of unknown feature vector $\boldsymbol{\theta}$ \\
 \hline
 \end{tabular}
 \end{center}
\end{table}

\subsection{Influence Maximization}
In the influence maximization (IM) problem, a directed graph $\mathcal{G} = (\mathcal{V}, \mathcal{E})$ is utilized to model the social network. $\mathcal{V} = \{1,2,\ldots,n\}$ is the set of users (nodes) and $\mathcal{E}$ is the set of edges with cardinality $m = |\mathcal{E}|$. Each edge $e \in \mathcal{E}$ is associated with an activation probability $p(e) \in [0,1]$. For example, an edge $e = (u,v)$ represents that user $v$ follows user $u$ on some social media and $p(u,v)$ represents the probability that user $v$ (receiving node) will be activated/influenced by user $u$ (giving node). Denote $\boldsymbol{P} = (p(e_1),\ldots,p(e_m))$ to be the activation probability vector. For a given seed set $\mathcal{S} \subseteq \mathcal{V}$ with activation probability $\boldsymbol{P}$, the expected number of influenced users under the diffusion model $D$ is $f_{D,\boldsymbol{P}}(\mathcal{S})$. By definition, the users/nodes in $\mathcal{S}$ are always influenced. 

Given $\mathcal{G}$ and a budget $K$ on the number of seeds to be selected, IM aims to find a seed set that maximizes the influence spread. Formally,
\begin{gather}
      \mathcal{S}^{\text{opt}} = \arg \max_{|\mathcal{S}|\leq K}f_{D,\boldsymbol{P}}(\mathcal{S}).
\end{gather}
The IM problem is NP-hard \cite{wen2017online,vaswani2017model,wu2019factorization}, but approximation algorithms exist \cite{chen2010scalable,tang2014influence}. In this paper, we refer to such algorithms as \textit{oracles}, which take a graph, size of the seed set, and activation probabilities of all edges as inputs and output an appropriate set of seeds.

Define $\mathcal{S}^{\text{opt}}$ as the optimal solution of the problem and $\mathcal{S}^* = \textsc{ORACLE}(\mathcal{G},K,\boldsymbol{P})$ as the (possibly random) solution of an oracle \textsc{ORACLE}. It serves as an $(\alpha,\gamma)$-approximation of $\mathcal{S}^{\text{opt}}$, $\alpha, \gamma \in [0,1]$, where $f_{D,\boldsymbol{P}}(\mathcal{S}^*) \geq \gamma f_{D,\boldsymbol{P}}(\mathcal{S}^{\text{opt}})$ with probability at least $\alpha$ \cite{chen2016combinatorial}. This further implies that $\mathbb{E}[f_{D,\boldsymbol{P}}(\mathcal{S}^*)] \geq \alpha \gamma f_{D,\boldsymbol{P}}(\mathcal{S}^{\text{opt}})$. Besides, if $\alpha = \gamma = 1$, the oracle is exact. 

The OIM problem is approachable within CMAB framework. In such a model, the users and any seed set represent the arms and a super arm, respectively. We assume that the expected influence spread (expected reward) satisfies the following assumptions, which are standard assumptions also in combinatorial bandit problems \cite{wang2017improving,wu2019factorization}. 
\begin{assumption}[Monotonicity]
\label{assump:mono}
The expected reward of playing any super arm $\mathcal{S} \in \bar{\mathcal{S}}$ is monotonically non-decreasing with respect to the expectation vector, i.e., if for all $i \in [m]$, $p(e_i) \leq p'(e_i)$, we have $f_{D,\boldsymbol{P}}(\mathcal{S}) \leq f_{D,\boldsymbol{P}'}(\mathcal{S})$, for all $\mathcal{S} \in \bar{\mathcal{S}}$ with $\bar{\mathcal{S}}$ being the set of all candidate super arms.
\end{assumption}
\begin{assumption}[1-Norm Bounded Smoothness]
\label{assump:bound}
A combinatorial multi-armed bandit with probabilisitically triggered arms (CMAB-T) satisfy 1-norm bounded smoothness, if there exists a bounded smoothness constant $B \in \mathbb{R}^{+}$ such that for any two distributions with expectation vectors $\boldsymbol{P}$ and $\boldsymbol{P}'$ and any action $\mathcal{S}$, we have $|f_{D,\boldsymbol{P}} - f_{D,\boldsymbol{P}'}| \leq B \sum_{i \in \tilde{S}}|p(e_i) - p'(e_i)|$, where $\tilde{S}$ is the set of edges (arms) that are triggered by $\mathcal{S}$. 
\end{assumption}
\begin{remark}
For the OIM problems, the 1-Norm Bounded Smoothness (Assumption~\ref{assump:bound}) holds with smoothness constant $B = \tilde{n}$, where $\tilde{n}$ is the largest
number of nodes any node can reach in the directed graph $\mathcal{G} = (\mathcal{V},\mathcal{E})$ \cite{wang2017improving}.
\end{remark}

\subsection{Online Influence Maximization under Corruption}
In real-world applications, the activation probability vector $\boldsymbol{P}$ is unknown and shall be learned via interaction with the network: At each round $t$, the learner/agent firstly chooses a seed set $\mathcal{S} \subseteq \mathcal{V}$ with cardinality $K$ based on its prior information and past observations. It then uses the feedback from the observed influence spread to refine the estimation of $\boldsymbol{P}$. The learner aims to maximize the influence spread through this repeated process. Multi-armed bandit framework, especially the linear bandit model, is widely used to solve the OIM problem \cite{wen2017online,vaswani2017model,wu2019factorization}. 

Similar to \cite{wen2017online}, we assume each edge $e \in \mathcal{E}$ is associated with a known feature vector $\boldsymbol{x}_e \in \mathbb{R}^d$ and an unknown coefficient vector $\boldsymbol{\theta} \in \mathbb{R}^d$, where $d$ is the dimension of the feature vector. Previous works assume that for all $e \in \mathcal{E}$, $p(e)$ is \textit{well-approximated} by $\boldsymbol{x}_e^T\boldsymbol{\theta}$. We assume malicious users can occasionally corrupt the diffusion process to mislead the agent into selecting sub-optimal seed sets. At each round $t$, if user $u$ is malicious, it can corrupt all the connected out edges with activation probabilities to its neighbors by $c_{u,t}$. Formally, the behavior of a corrupted user satisfies the following assumption. 
\begin{assumption}
\label{assump:clb-p}
For all $u,v \in \mathcal{V}$ with $e = (u,v) \in \mathcal{E}$, let $p(e)$ be the probability that user $v$ can be activated by $u$ at round $t$. For a normal user, the activation probability $p(e)$ of any out-edge $e$ can always be well-approximated as
\begin{gather}
p(e) = \boldsymbol{x}_{e}^T\boldsymbol{\theta}, \label{eq:p-normal}
\end{gather}
whereas for any corrupted user $u \in \mathcal{V}$, the activation probability of its out-edge $e$ at $t$ is given by
\begin{gather}
p_t(e) = \boldsymbol{x}_{e}^T\boldsymbol{\theta} + c_{u,t}. 
\end{gather}
\end{assumption}
In real-world applications, the activation probabilities of corrupted users' out-edges are often well-approximated by $\boldsymbol{x}_e^T\boldsymbol{\theta}$, similar to normal users; Nevertheless, a small fraction of them can be adversarially corrupted at time step $t$ with level $c_{u,t}$. Therefore, since $c_{u,t}$ can become zero at some time intervals, learning the ground truth $p(e)$ is challenging. 

Similar to \cite{wen2017online,wu2019factorization}, we assume an independent cascade diffusion model. In addition, below, we define the edge semi-bandit feedback.
\begin{definition}[Edge semi-bandit feedback]
In edge semi-bandit feedback, or edge level bandit feedback, the agent observes the influenced edge; That is, at any round $t$, the agent observes an edge $e = (u,v)$ if and only if its starting node $u$ is activated. 
\end{definition}

The performance measure for the learning algorithm is the \textit{expected regret}, which is the difference between the optimal influence under perfect knowledge and the realized influence spread by the algorithm. Since computing the optimal seed set is NP-hard even under the perfect knowledge, similar to \cite{chen2016combinatorial,wen2017online,vaswani2017model,wu2019factorization,zuo2022online}, we measure the performance of the algorithm by scaled cumulative regret defined as follows.
\begin{gather}
    R^{\alpha \gamma}(T) = T \cdot f_{D,\boldsymbol{P}}(\mathcal{S}^{opt}) - \frac{1}{\alpha \gamma}\mathbb{E}[\sum_{t=1}^T f_{D,\boldsymbol{P}}(\mathcal{S}_t)],
\end{gather}
where $\alpha \gamma \in (0,1)$. 

We assume that the feature vector $\boldsymbol{x}$ and $\boldsymbol{\theta}$ satisfy the following assumption on the bandit model.
\begin{assumption}
\label{assum:sys}
For any edge $e \in \mathcal{E}$, the feature vector $\boldsymbol{x}_e$ satisfies $\norm{\boldsymbol{x}_e} \leq 1$. The unknown coefficient feature vector $\boldsymbol{\theta}$ satisfies $\norm{\boldsymbol{\theta}} \leq \Theta$ where $\Theta$ is a constant. For the normalized feature vector, $\Theta = 1$.
\end{assumption}

To measure the level of adversarial corruptions, we define the \textit{corruption level (total corruption budget)} as $C_u = \sum_{t=1}^T |c_{u,t}|$ and $C = \max_{u \in \mathcal{V}} C_u$ is the maximum corruption level. 
\begin{remark}
The presence of corruption in activation probabilities does not affect the fulfillment of Assumptions~\ref{assump:mono} and ~\ref{assump:bound}. Assumptions~\ref{assump:mono} and ~\ref{assump:bound} are originally proposed in \cite{wang2017improving} with a multi-armed bandit framework, where each edge is linked to an activation probability, without making any assumptions on how this probability is approximated using edge features. Consequently, they remains valid irrespective of the activation probability approximation model.
\end{remark}

\begin{remark}
Our definition of corruption level is an extension of the definition in \cite{bogunovic2021stochastic,he2022nearly}. We extend the original definition from one corrupted user setting to multiple corrupted users by considering the worst-case with the maximum corruption level.
\end{remark}
\section{A Confidence Weighted Bandit Solution}
\label{sec:cwb}
In this section, we propose Confidence Weighted Influence Maximization Linear UCB (CW-IMLinUCB) algorithm, which robustly learns the activation probability over the directed graph from corrupted feedback. Algorithm~\ref{alg:cwim-lin} presents the pseudocode.
\begin{algorithm}[!ht]
\caption{CW-IMLinUCB} 
\label{alg:cwim-lin}
\begin{algorithmic}[1]
\STATE \textbf{Input}: Graph $\mathcal{G} = \{\mathcal{V}, \mathcal{E}\}$, seed set cardinality $K$, oracle \textsc{ORACLE}, edge feature vector $\boldsymbol{x}_e$, $\forall e \in \mathcal{E}$, algorithm parameters $\lambda, \sigma, \beta > 0$.
\STATE \textbf{Initialization} 
\begin{itemize}
    \item $\boldsymbol{b}_0 = \boldsymbol{0} \in \mathbb{R}^d$ and $\boldsymbol{M}_0 = I \in \mathbb{R}^{d\times d}$;
    \item $\hat{\boldsymbol{\theta}} = \boldsymbol{0} \in \mathbb{R}^d$ and $\hat{p}_0(e) = 1$, for all $e \in \mathcal{E}$;
\end{itemize}
\FOR{$t = 1,2, \cdots, T$}
\STATE Choose $\mathcal{S}_t \leftarrow \textsc{ORACLE}(\mathcal{G},K,\hat{\boldsymbol{P}}_{t-1})$ where $\hat{\boldsymbol{P}}_{t-1} = \{\hat{p}_{t-1}(e)\}_{e\in\mathcal{E}}$
\STATE Observe the edge level semi-bandit feedback $\boldsymbol{y}_t \in \mathbb{R}^m$
\FOR{$e \in \mathcal{E}$}
\IF{$e \in \tilde{\mathcal{E}}_t$} 
\STATE \algorithmiccomment{weighted regression} \\
$\omega_{e,t} = \min \{1, \lambda/\norm{\boldsymbol{x}_e}_{\boldsymbol{M}_{t-1}^{-1}}\}$ 
\STATE $\boldsymbol{b}_{t} \leftarrow \boldsymbol{b}_{t-1} + \omega_{e,t}\boldsymbol{x}_ey_t(e)$
\STATE $\boldsymbol{M}_{t} \leftarrow \boldsymbol{M}_{t-1} + \sigma^{-2}\omega_{e,t}\boldsymbol{x}_e\boldsymbol{x}_e^T$
\ELSE
\STATE $\boldsymbol{b}_{t} \leftarrow \boldsymbol{b}_{t-1}$
\STATE $\boldsymbol{M}_{t} \leftarrow \boldsymbol{M}_{t-1}$
\ENDIF
\ENDFOR
\STATE $\hat{\boldsymbol{\theta}}_{t} \leftarrow \sigma^{-2}\boldsymbol{M}_{t}^{-1}\boldsymbol{b}_{t}$
\STATE $\hat{p}_t(e) = \mathbb{P}_{[0,1]}(\hat{\boldsymbol{\theta}}_{t}^T\boldsymbol{x}_e + \beta\norm{\boldsymbol{x}_e}_{\boldsymbol{M}_{t}^{-1}})$, for all $e \in \mathcal{E}$
\ENDFOR
\end{algorithmic}
\end{algorithm}

The inputs of CW-IMLinUCB are the network topology $\mathcal{G}$, the seed set cardinality $K$, the optimization algorithm \textsc{ORACLE}, the feature vectors $\boldsymbol{x}_e \in \mathbb{R}^d$, $\forall e \in \mathcal{E}$ and three algorithm hyper-parameters $\lambda, \sigma, \beta > 0$. The value of $\sigma$ is proportional to the noise in the observations and hence controls the learning rate \cite{vaswani2017model}. For each time step $t$, we define the Gram matrix $\boldsymbol{M}_{t} \in \mathbb{R}^{d\times d}$ and $\boldsymbol{b}_{t} \in \mathbb{R}^d$ as the vector summarizing the past propagations. Besides, $\hat{\boldsymbol{\theta}}_{t}$ refers to the estimation of the unknown coefficient vector at time step $t$. $\boldsymbol{M}_{t}$ and $\boldsymbol{b}_{t}$ are sufficient statistics to compute $\hat{\boldsymbol{\theta}}_t$ and estimate the activation probability $p(e)$. The parameter $\beta$ is utilized in forming the upper confidence bound (UCB) to consider the tradeoff between mean and variance, thus controls the \textit{degree of optimism} of the algorithm \cite{vaswani2017model}.

At each time step $t$, CW-IMLinUCB firstly uses the estimated UCB of the activation probability from last time step to compute the seed set $\mathcal{S}_t$ based on the given optimization algorithm $\textsc{ORACLE}$ (Line 4). Then the algorithm receives the edge semi-bandit feedback. $\tilde{\mathcal{E}}_t$ refers to the set including all the observed edges at time step $t$ and $\boldsymbol{y}_t$ is an $m$-dimensional vector with $y_t(e_i) = y_t((u,v)) = \mathbbm{1}\{v \text{ is activated via edge $e_i$ at time step } t\}$, $\forall i \in \{1,2,\ldots,m\}$, which records the activation result. Afterwards, it updates $\boldsymbol{M}$, $\boldsymbol{b}$ and the UCB of the activation probability for each edge, where $\mathbb{P}_{[0,1]}(\cdot)$ denotes the Euclidean projection onto the nearest point in the interval $[0,1]$. The algorithm utilizes the updated activation probability estimation in the seed set selection of the next round. 

Specially, different from previous works \cite{vaswani2015influence,vaswani2017model,wu2019factorization}, which directly apply the classical OFUL algorithm with ridge regression \cite{abbasi2011improved} to estimate the unknown feature vector, our algorithm assigns each edge $e$ a weight factor $\omega_{e,t}$. More precisely, the previous works estimate $\boldsymbol{\theta}$ by online ridge regression over all past observations, i.e.,
\begin{gather*}
    \boldsymbol{\theta}_{t} \rightarrow \arg \min_{\boldsymbol{\theta} \in \mathbb{R}^d}\norm{\boldsymbol{\theta}}_2^2 + \sum_{\tau=1}^{t}\sum_{e \in \tilde{\mathcal{E}}_{\tau}}\sigma^{-2}(\boldsymbol{\theta}^T\boldsymbol{x}_e - y_{\tau}(e))^2. 
\end{gather*}
However, in the presence of corruption, the previous algorithms that rely on the upper confidence bound parameter $\beta$ without accounting for corruption \cite{abbasi2011improved,wen2017online} will experience a deterioration in regret performance,  which will lead to a term $O(C\sqrt{T})$ in the regret, i.e., the regret bound is $C$ times worse than the regret without corruption \cite{he2022nearly}.


To overcome this difficulty, inspired by \cite{he2022nearly}, we use the weighted ridge regression to estimate $\boldsymbol{\theta}$ as
\begin{gather*}
    \hat{\boldsymbol{\theta}}_{t} \rightarrow \arg \min_{\boldsymbol{\theta} \in \mathbb{R}^d}\norm{\boldsymbol{\theta}}_2^2 + \sum_{\tau=1}^{t}\sum_{e \in \tilde{\mathcal{E}}_{\tau}}\omega_{e,\tau}\sigma^{-2}(\boldsymbol{\theta}^T\boldsymbol{x}_e - y_{\tau}(e))^2, 
\end{gather*}
where its closed-form solution is $\hat{\boldsymbol{\theta}}_{t} = \sigma^{-2} \boldsymbol{M}_{t}^{-1}\boldsymbol{b}_{t}$, where $\boldsymbol{M}_{t} = I + 
\sigma^{-2}\sum_{\tau=1}^{t}\sum_{e \in \tilde{\mathcal{E}}_{\tau}}\omega_{e,\tau}\boldsymbol{x}_e\boldsymbol{x}_e^T$ and $\boldsymbol{b}_{t} = \sum_{\tau=1}^{t}\sum_{e \in \tilde{\mathcal{E}}_{\tau}}\omega_{e,\tau}\boldsymbol{x}_ey_{\tau}(e)$ (line 6 to line 16). We set the weight of sample at round $t$ as $\omega_{e,t} = \min\{1,\frac{\lambda}{\norm{\boldsymbol{x}_e}_{\boldsymbol{M}_{t-1}^{-1}}}\}$, where $\lambda > 0$ is a threshold coefficient to be determined later. 

\begin{remark}
The term $\norm{\boldsymbol{x}_e}_{\boldsymbol{M}_{t-1}^{-1}}$ in line 8 of Algorithm~\ref{alg:cwim-lin} refers to the confidence radius. If $\norm{\boldsymbol{x}_e}_{\boldsymbol{M}_{t-1}^{-1}}$ is large, CW-IMLinUCB assigns a small weight $\omega_{e,t}$ to avoid the potentially large regret caused by noise and adversarial corruption, while when $\norm{\boldsymbol{x}_e}_{\boldsymbol{M}_{t-1}^{-1}}$ is small, it assigns a large weight $\omega_{e,t}$ (no more than 1) \cite{he2022nearly,wang2023online}. Therefore, with carefully selected $\lambda$, our CW-IMLinUCB algorithm can get rid of the $O(C\sqrt{T})$ term caused by the corruption in the final regret compared to the OIM algorithm with original ridge regression \cite{vaswani2015influence}. 
\end{remark}
\begin{remark}
CW-IMLinUCB has a storage complexity independent of $t$ since only $\boldsymbol{M}_t$ and $\boldsymbol{b}_t$ need to be stored and updated. We need to emphasise that CW-IMLinUCB's  computational efficiency replies heavily on the computational efficiency of \textsc{ORACLE}. Specifically, at each time step $t$, the computational complexity of CW-IMLinUCB from line 5 to line 17 is $O(md^2)$. Although updates of CW-IMLinUCB in line 8 and 16 involve matrix inversions, the operations do not incur high computational complexity since $\boldsymbol{M}_t$ only have sizes $d \times d$. 
\end{remark}

\section{Theoretical Analysis}\label{sec:ta}
In this section, we derive a regret bound of CW-IMLinUCB under Assumption~\ref{assump:mono}-\ref{assum:sys}. Notice that Assumption~\ref{assump:mono}-\ref{assump:clb-p} are standard for bandit analysis, and Assumption~\ref{assum:sys} can be satisfied by rescaling the feature vectors. 

First we introduce the following definition. 
\begin{definition}
\label{def:Estar}
Assume that the graph $\mathcal{G} = (\mathcal{V},\mathcal{E})$ includes $l$ disconnected subgraphs $\mathcal{G}_1 = (\mathcal{V}_1,\mathcal{E}_1)$, $\ldots$, $\mathcal{G}_l = (\mathcal{V}_l,\mathcal{E}_l)$, which are in a descending order according to the number of nodes of each graph. $E^*$ is defined as the number of the edges containing in the first $\min \{l,K\}$ subgraphs \cite{wen2017online},
\begin{gather}
    E^* = \sum_{i=1}^{\min\{l,K\}}|\mathcal{E}_i|,
\end{gather}
and it is easy to obtain $E^* \leq |\mathcal{E}| = m$. 

Furthermore, we introduce $\mathcal{D}_u(\mathcal{G}_i)$ as the set containing all descendants of node $u$ within graph $\mathcal{G}_i$. Let $\mathcal{P}_{u,v \in \mathcal{D}_u(\mathcal{G}_i)}$ denote the set containing all paths from node $u$ to its descendant $v \in \mathcal{D}_u(\mathcal{G}_i)$, and $\mathcal{E}_{u,d \in \mathcal{D}_u(\mathcal{G}_i)}$ denote as the set collecting all edges within $\mathcal{P}_{u,v \in \mathcal{D}_u(\mathcal{G}_i)}$. In other words, $\mathcal{E}_{u,d \in \mathcal{D}_u(\mathcal{G}_i)}$ captures the edges forming paths to all descendants of node $u$ within $\mathcal{G}_i$. For simplification, we call these edges as descendant edges. $E^c$ is defined as 
\begin{gather}
E^c = \sum_{i=1}^{\min \{l,K\}} \max_u |\mathcal{E}_{u,v \in \mathcal{D}_u(\mathcal{G}_i)}|. 
\end{gather}
In words, $E^c$ is the summation of maximum count of descendant edges within the first $\min \{l,K\}$ subgraphs, with $E^c \leq E^* \leq m$.
\end{definition}

The following lemma defines the upper confidence bound parameter $\beta$.
\begin{lemma}
\label{lem:conf-b}
For any $0 < \delta < 1$ and corruption budget $C \geq 0$, set the confidence radius $\beta = \sigma^{-2}\sqrt{d\log(1+\frac{E^*T}{d}) + 2\log (\frac{1}{\delta})} + \sigma^{-2}\lambda E^cC + \Theta$ then with probability at least $1-\delta$, for every round $t$, the good event $\xi_{t-1} = \Big \{|\boldsymbol{x}_e^T(\hat{\boldsymbol{\theta}}_{\tau-1} - \boldsymbol{\theta})| \leq \beta \sqrt{\boldsymbol{x}_e^T\boldsymbol{M}_{\tau-1}^{-1}\boldsymbol{x}_e},\forall e \in \mathcal{E}, \forall \tau \leq t \Big\}$ happens $\forall t \in \{1,2,\ldots, T\}$. $E^c \leq E^* \leq |\mathcal{E}|$ and the corresponding definitions are stated in Definition~\ref{def:Estar}. 
\end{lemma}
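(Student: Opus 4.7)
The plan is to obtain the stated confidence ellipsoid by decomposing the estimation error of the weighted ridge regression estimator into three components and bounding each separately. Starting from the closed form $\hat{\boldsymbol{\theta}}_t = \sigma^{-2}\boldsymbol{M}_t^{-1}\boldsymbol{b}_t$ together with the corrupted observation model $y_\tau(e) = \boldsymbol{x}_e^T\boldsymbol{\theta} + c_\tau(e) + \eta_\tau(e)$, where $c_\tau(e) = c_{u,\tau}$ when $e$ is an out-edge of a corrupted user $u$ (and zero otherwise) and $\eta_\tau(e)$ is the conditionally centred Bernoulli noise, the identity $\sigma^{-2}\sum_\tau\sum_e\omega_{e,\tau}\boldsymbol{x}_e\boldsymbol{x}_e^T = \boldsymbol{M}_t - I$ yields
\begin{equation*}
\hat{\boldsymbol{\theta}}_t - \boldsymbol{\theta} = -\boldsymbol{M}_t^{-1}\boldsymbol{\theta} + \sigma^{-2}\boldsymbol{M}_t^{-1}\sum_{\tau\le t}\sum_{e\in\tilde{\mathcal{E}}_\tau}\omega_{e,\tau}\boldsymbol{x}_e\eta_\tau(e) + \sigma^{-2}\boldsymbol{M}_t^{-1}\sum_{\tau\le t}\sum_{e\in\tilde{\mathcal{E}}_\tau}\omega_{e,\tau}\boldsymbol{x}_ec_\tau(e).
\end{equation*}
Applying Cauchy--Schwarz in the $\boldsymbol{M}_t^{-1}$-norm factors out $\|\boldsymbol{x}_e\|_{\boldsymbol{M}_t^{-1}}$ from the deviation, reducing the task to bounding the three resulting scalar quantities.

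For the regularization term, $\boldsymbol{M}_t \succeq I$ gives $\|\boldsymbol{\theta}\|_{\boldsymbol{M}_t^{-1}} \le \|\boldsymbol{\theta}\| \le \Theta$ by Assumption~\ref{assum:sys}, producing the $\Theta$ summand of $\beta$. For the martingale/noise term, I would invoke the self-normalized vector concentration of Abbasi-Yadkori \emph{et al.}~\cite{abbasi2011improved} adapted to the weighted sequence $\{\omega_{e,\tau}\boldsymbol{x}_e\eta_\tau(e)\}$, which is predictable because $\omega_{e,\tau}$ depends only on the history up to round $\tau-1$ and $\eta_\tau(e)$ is bounded, hence sub-Gaussian. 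The resulting $\log\det(\boldsymbol{M}_t)$ factor is controlled via the trace bound together with $\omega_{e,\tau}\le 1$, $\|\boldsymbol{x}_e\|\le 1$, and the key structural fact from Definition~\ref{def:Estar} that at most $E^*$ edges can be triggered in any round, since the cascade is confined to at most $\min\{l,K\}$ disconnected components. The determinant--trace (AM--GM) inequality then yields $\log\det(\boldsymbol{M}_t)\le d\log(1+E^*T/d)$, which combined with a $2\log(1/\delta)$ slack gives the first summand of $\beta$.

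For the corruption term, the essential tool is the weight design $\omega_{e,\tau}\le \lambda/\|\boldsymbol{x}_e\|_{\boldsymbol{M}_{\tau-1}^{-1}}$. Passing the norm inside the sum and using $\boldsymbol{M}_t\succeq\boldsymbol{M}_{\tau-1}$ yields
\begin{equation*}
\Bigl\|\sum_{\tau\le t}\sum_{e\in\tilde{\mathcal{E}}_\tau}\omega_{e,\tau}\boldsymbol{x}_ec_\tau(e)\Bigr\|_{\boldsymbol{M}_t^{-1}} \le \sum_{\tau\le t}\sum_{e\in\tilde{\mathcal{E}}_\tau}\omega_{e,\tau}|c_\tau(e)|\,\|\boldsymbol{x}_e\|_{\boldsymbol{M}_{\tau-1}^{-1}} \le \lambda\sum_{\tau\le t}\sum_{e\in\tilde{\mathcal{E}}_\tau}|c_\tau(e)|,
\end{equation*}
so it remains to bound the triggered corruption mass by $E^cC$. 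I expect this to be the main obstacle: one must argue that in each round the activated out-edges of corrupted users are confined to descendant edges within each of the at-most $\min\{l,K\}$ engaged components, bounded per component by $\max_u|\mathcal{E}_{u,v\in\mathcal{D}_u(\mathcal{G}_i)}|$, while $\sum_\tau|c_{u,\tau}|\le C$ for every corrupted user. Aggregating across components and time, and folding in the $\sigma^{-2}$ and $\lambda$ prefactors, produces the $\sigma^{-2}\lambda E^c C$ summand.

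Finally, summing the three bounds and noting that the self-normalized inequality already holds uniformly in $t$ (so no extra union bound over $t$ is required), while uniformity over $e\in\mathcal{E}$ comes for free from the Cauchy--Schwarz step, establishes the good event $\xi_{t-1}$ with the claimed probability. The crux of the argument is the graph-structural reduction converting the cumulative triggered corruption into $E^cC$, which exploits the disconnected-component decomposition from Definition~\ref{def:Estar} and the observation that corruption enters the regression only at out-edges of malicious nodes, since downstream observations along propagation paths remain unbiased under Assumption~\ref{assump:clb-p}.
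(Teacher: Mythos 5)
Your proposal follows essentially the same route as the paper's proof: the same three-way decomposition of $\hat{\boldsymbol{\theta}}_t-\boldsymbol{\theta}$ into a regularization term, a self-normalized noise term, and a corruption term, each bounded with the same tools (Cauchy--Schwarz in the $\boldsymbol{M}^{-1}$-norm, the Abbasi-Yadkori concentration applied to the $\sqrt{\omega_{e,\tau}}$-reweighted sequence together with the determinant--trace inequality and $|\tilde{\mathcal{E}}_t|\le E^*$, and the weight design $\omega_{e,\tau}\norm{\boldsymbol{x}_e}_{\boldsymbol{M}_{\tau-1}^{-1}}\le\lambda$ combined with $\boldsymbol{M}_{t-1}\succeq\boldsymbol{M}_{\tau-1}$ for the corruption mass). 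The graph-structural step you flag as the crux --- bounding the cumulative triggered corruption by $E^cC$ --- is asserted in the paper with no more justification than in your sketch, so your account is, if anything, slightly more explicit there.
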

\begin{proof}
    See Appendix~\ref{app:proof-lem}. 
\end{proof}

The following theorem states the regret bound of CW-IMLinUCB. 
\begin{theorem}
\label{the:rgt}
Assume that the activation probability of users satisfy Assumption~\ref{assump:clb-p}. Besides, \textsc{ORACLE} is an $(\alpha,\gamma)$-approximation algorithm. Let $\Theta$ be the known upper bound on $\norm{\boldsymbol{\theta}}$, $C \geq 0$ is the corruption budget, $\lambda= \frac{\sqrt{d}}{CE^c}$. For any $\sigma > 0$ and any $x_e \in \mathbb{R}^d$, $\forall e \in \mathcal{E}$, if $\beta$ satisfies
\begin{align}
    \beta \geq \sigma^{-2}  \sqrt{d\log(1+\frac{E^*T}{d}) + 2\log (nT)} + \sigma^{-2}\lambda E^c C + \Theta, \label{eq:beta}
\end{align}
the $\alpha \gamma$-scaled regret is upper bounded as
\begin{gather}
    R^{\alpha \gamma}(T) \leq  O(dBE^*\sqrt{T}\log(nT) + BE^*E^cCd\log(nT)). \label{eq:regret}
\end{gather}
\end{theorem}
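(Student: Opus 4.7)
The plan is to follow the classical optimism-in-the-face-of-uncertainty template for combinatorial semi-bandits, but to exploit the weighted ridge regression to absorb the adversarial corruption into the confidence radius $\beta$ rather than incur an $O(C\sqrt{T})$ regret penalty. First, I would instantiate Lemma~\ref{lem:conf-b} with $\delta = 1/(nT)$, so that the good event $\xi_{t-1}$ holds uniformly over all $t\le T$ with probability at least $1-1/(nT)$; the failure event contributes only an $O(B)$ additive term (bounded single-round spread) that is absorbed into the big-$O$. On $\xi_{t-1}$, the UCB inequality $\boldsymbol{x}_e^\top\hat{\boldsymbol\theta}_{t-1}+\beta\|\boldsymbol{x}_e\|_{\boldsymbol{M}_{t-1}^{-1}}\ge p(e)$ holds for every $e\in\mathcal{E}$, and clipping onto $[0,1]$ preserves this since $p(e)\in[0,1]$, so $\hat p_{t-1}(e)\ge p(e)$. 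Monotonicity (Assumption~\ref{assump:mono}) gives $f_{D,\hat{\boldsymbol P}_{t-1}}(\mathcal{S})\ge f_{D,\boldsymbol P}(\mathcal{S})$ for every $\mathcal{S}$, which combined with the $(\alpha,\gamma)$-approximation of \textsc{ORACLE} yields $\mathbb{E}[f_{D,\hat{\boldsymbol P}_{t-1}}(\mathcal{S}_t)\mid\mathcal{H}_{t-1}]\ge \alpha\gamma\, f_{D,\boldsymbol P}(\mathcal{S}^{\mathrm{opt}})$. Applying the $1$-norm bounded smoothness of Assumption~\ref{assump:bound} together with the two-sided confidence bound $|\hat p_{t-1}(e)-p(e)|\le 2\beta\|\boldsymbol{x}_e\|_{\boldsymbol{M}_{t-1}^{-1}}$ bounds the per-round scaled regret by $\tfrac{2B\beta}{\alpha\gamma}\mathbb{E}[\sum_{e\in\tilde{\mathcal{E}}_t}\|\boldsymbol{x}_e\|_{\boldsymbol{M}_{t-1}^{-1}}\mid\mathcal{H}_{t-1}]$; summing over $t$ reduces the theorem to bounding $\mathcal{Z}:=\mathbb{E}[\sum_{t=1}^{T}\sum_{e\in\tilde{\mathcal{E}}_t}\|\boldsymbol{x}_e\|_{\boldsymbol{M}_{t-1}^{-1}}]$.

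Next, I would bound $\mathcal{Z}$ by partitioning the pairs $(t,e)$ according to whether $\|\boldsymbol{x}_e\|_{\boldsymbol{M}_{t-1}^{-1}}\le\lambda$ or not. On the well-explored pairs the weight is $\omega_{e,t}=1$, so $\|\boldsymbol{x}_e\|_{\boldsymbol{M}_{t-1}^{-1}}=\sqrt{\omega_{e,t}\|\boldsymbol{x}_e\|_{\boldsymbol{M}_{t-1}^{-1}}^2}$, and Cauchy--Schwarz combined with $|\tilde{\mathcal{E}}_t|\le E^*$ (Definition~\ref{def:Estar}) produces a $\sqrt{E^*T\,\Phi_T}$ contribution, where $\Phi_T:=\sum_{t,e}\omega_{e,t}\|\boldsymbol{x}_e\|_{\boldsymbol{M}_{t-1}^{-1}}^2$. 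On the under-explored pairs $\omega_{e,t}=\lambda/\|\boldsymbol{x}_e\|_{\boldsymbol{M}_{t-1}^{-1}}$, so $\|\boldsymbol{x}_e\|_{\boldsymbol{M}_{t-1}^{-1}}=\omega_{e,t}\|\boldsymbol{x}_e\|_{\boldsymbol{M}_{t-1}^{-1}}^2/\lambda$, and their total contribution is at most $\Phi_T/\lambda$. A weighted elliptical-potential lemma analogous to the one in \cite{he2022nearly}, adapted to the semi-bandit update rule by telescoping the rank-one updates $\sigma^{-2}\omega_{e,t}\boldsymbol{x}_e\boldsymbol{x}_e^\top$ for $e\in\tilde{\mathcal{E}}_t$ sequentially inside the $\log\det$ ratio, then yields $\Phi_T=O(\sigma^2 d\log(1+E^*T/d))$.

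Finally, I would plug in $\lambda=\sqrt{d}/(CE^c)$ and $\beta$ from~(\ref{eq:beta}) (the $\sigma^{-2}\lambda E^cC$ piece simplifies to $\sigma^{-2}\sqrt d$, so $\beta=O(\sqrt{d\log(nT)})$) to obtain $\mathcal{Z}=O\bigl(\sqrt{dE^*T\log(nT)}+\sqrt{d}\,CE^c\log(nT)\bigr)$. Multiplying by $2B\beta/(\alpha\gamma)$ and loosening $\sqrt{E^*}\le E^*$ (as well as absorbing the residual $\sqrt{\log(nT)}$ factor on the corruption-driven term into the big-$O$) recovers the bound in~(\ref{eq:regret}). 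The main obstacle I anticipate is the batched weighted elliptical-potential argument: Algorithm~\ref{alg:cwim-lin} uses the same $\boldsymbol{M}_{t-1}^{-1}$ for every edge triggered inside round $t$, whereas the standard lemma in \cite{he2022nearly} processes a single rank-one update per step, so one must either telescope through an artificial sequential ordering of the within-round updates or directly extend the determinant-ratio proof to block updates while preserving the $d\log(1+E^*T/d)$ bound. A secondary subtlety is verifying that the corruption magnitude $c_{u,t}$ is faithfully absorbed into $\beta$ by the weighted regression rather than leaking an additive $B\sum_t|c_{u,t}|$ term at the smoothness step; Lemma~\ref{lem:conf-b} is tailor-made for this, and invoking it is what converts the $O(C\sqrt{T})$ loss suffered by corruption-oblivious analyses into the milder $O(BE^*E^cCd\log(nT))$ term in~(\ref{eq:regret}).
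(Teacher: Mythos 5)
Your proposal is correct and follows essentially the same route as the paper's proof: the good/bad event decomposition with $\delta = 1/(nT)$, monotonicity plus the $(\alpha,\gamma)$-oracle guarantee, $1$-norm smoothness to reduce the regret to $\frac{2B\beta}{\alpha\gamma}\sum_t\sum_{e\in\tilde{\mathcal{E}}_t}\norm{\boldsymbol{x}_e}_{\boldsymbol{M}_{t-1}^{-1}}$, the split into $\omega_{e,t}=1$ and $\omega_{e,t}<1$ terms bounded by $\sqrt{E^*T\Phi_T}$ and $\Phi_T/\lambda$ respectively via a weighted elliptical-potential argument, and the choice $\lambda=\sqrt{d}/(CE^c)$. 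The batched within-round update issue you flag as the main obstacle is resolved in the paper exactly as you suggest, by telescoping through an artificial sequential ordering with auxiliary matrices $\boldsymbol{A}_i$ and raising the determinant inequality to the power $|\tilde{\mathcal{E}}_{t_i}|$, which is where the extra $E^*$ factor in the potential bound comes from.
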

\begin{proof}
See Appendix~\ref{app:proof-the}.
\end{proof}
\begin{remark}
If we consider the individual corruption budget $C_u$, by selecting $\lambda = \frac{\sqrt{d}}{\sum_{i=1}^{\min \{l,K\}} \max_u |\mathcal{E}_{u,v \in \mathcal{D}_u(\mathcal{G}_i)}| C_u}$, we can derive the tighter regret bound $O(dE^*\sqrt{T}\log (nT) + (\sum_{i=1}^{\min\{l.K\}}\max_u |\mathcal{E}_{u,v \in \mathcal{D}_u(\mathcal{G}_i)}|) C_u E^*d\log (nT))$. Definition~\ref{def:Estar} defines $\mathcal{E}_{u,v \in \mathcal{D}_u(\mathcal{G}_i)}$ and $\mathcal{D}_u(\mathcal{G}_i)$. 
\end{remark}
\begin{remark}
The regret bound in \eqref{eq:regret} is a topology-dependent bound. By Definition~\ref{def:Estar}, $E^c$ and $E^*$ are less than $m$. Thus, the worst-case upper bound of the scaled regret yields $O(dmB\sqrt{T}\log(nT) + Bm^2Cd\log(nT))$. 
\end{remark}
\begin{remark}
When $C = 0$, i.e., no user is malicious, our setting reduces to the classic OIM problem. For unknown $C$, if a (potentially imprecise) estimation of it, namely, $\bar{C}$, is available, by selecting $\lambda= \sqrt{d}/(\bar{C}E^c)$, $\beta \geq \sigma^{-2}\sqrt{d\log(1+\frac{E^*T}{d}) + 2\log (nT)} + \sigma^{-2}\lambda E^c\bar{C} + \Theta$, we can distinguish the following cases:
\begin{itemize} 
\item If $C \leq \bar{C}$, the scaled regret is bounded by $O(dBE^*\sqrt{T}\log(nT) + BE^*E^c\bar{C}d\log(nT))$. 
\item If $C \geq \bar{C}$, the algorithm has a linear regret bound with respect to the time horizon, i.e., $O(T)$.
\end{itemize}
In addition, if we set $\overline{C} = \sqrt{T}$, then when $0 < C \leq \sqrt{T}$, the regret is upper bounded by $O(dBE^*E^c\log(nT))$.
\end{remark}
\section{Experiment}\label{sec:experiments}
Unlike the previous work dealing with corruption concerning a single agent, our work delves into a unique aspect where the impact of corruption can propagate throughout the entire network in influence maximization. In this case, the number of corrupted users is not the sole determinant of the regret bound. Indeed, the placement of corrupted nodes within the network significantly influences the extent of regret. When a corrupted node occupies a pivotal position within the network, the resulting corruption effect can surpass that caused by numerous randomly selected nodes. In other words, even a few number of corrupted users have the potential to disseminate the corruption effects across the entire network. In this section, we first highlight the importance of the position of the corrupted nodes in the network and compare the performance of our algorithm against benchmarks with a toy example. We evaluate CW-IMLinUCB on a carefully selected network topology (Figure~\ref{fig:e1-network}) and validate our algorithm's performance under the dissemination of the corruption effects. Similar to previous work \cite{vaswani2015influence,vaswani2017model,kong2023online}, we evaluate the performance of our algorithm using a randomly-generated synthetic- and real-world datasets. Besides, we compare the results to the following state-of-the-art bandit algorithms:
\begin{itemize}
    \item \textbf{$\epsilon$-greedy}: This algorithm learns the activation probability of each edge independently and uses $\epsilon$-greedy \cite{auer2002finite} to balance exploitation and exploration.
    \item \textbf{CUCB} \cite{wang2017improving}: This algorithm learns the activation probability of each edge independently via a multi-armed bandit framework. 
    \item \textbf{IMLinUCB} \cite{wen2017online}: This algorithm learns under an edge-level bandit feedback and approximates the activation probability as the inner product of known edge features and one shared unknown feature among all the edges. 
    \item \textbf{DILinUCB} \cite{vaswani2017model}: This algorithm is model-dependent and approximates the activation probability as the inner product of the known target feature vector of the edge's ending node and weight vector of the edge's source node. 
    \item \textbf{OIMLinUCB}: This algorithm is the variation of IMFB \cite{wu2019factorization} and we assume the susceptibility vector $\boldsymbol{x}$ is known in advance, which is similar to DILinUCB and IMLinUCB.
\end{itemize}
Specially, for \textbf{DILinUCB} and \textbf{OIMLinUCB}, we also implement their variants with confidence weighted regression (\textbf{CW-DILinUCB} and \textbf{CW-OIMLinUCB}) to compare. Additionally, for all implemented algorithms, the DegreeDiscount algorithm \cite{chen2009efficient} is used as the \textsc{ORACLE}. ALL the experimental results are the average of ten independent runs. In all plots, error bars indicate the standard deviations divided by $\sqrt{10}$.
\subsection{Toy Example}\label{sec:e0}
In the toy example experiment, we implement our algorithm in a ten-node network with a single corrupted user and select one seed. The network is an Erdős-Rényi graph and creates possible edges with probability 0.3. Figure~\ref{fig:e1-network} shows the network structure. We consider a variation of flip-$\boldsymbol{\theta}$ attack as the corrupted behavior of the corrupted users. Flip-$\theta$ attack simply flips the reward from $\boldsymbol{\theta}^T\boldsymbol{x}$ to $-\boldsymbol{\theta}^T\boldsymbol{x}$ \cite{bogunovic2021stochastic,he2022nearly}. Considering activation probability acts as reward in online influence maximization, we add one constant to the reward and then make a flip-$\theta$ attack in our experimental setting. Thus, corrupted users trick the learning algorithm by changing the activation probability to become $p(e) = \max(0,0.05 - \boldsymbol{x}_e^T\boldsymbol{\theta})$ for the first $C_T = 100$ rounds. In the remaining rounds, the corrupted user acts normally. The activation probabilities in normal manner follow \eqref{eq:p-normal} in Assumption~\ref{assump:clb-p}. Each dimension of feature vectors $\boldsymbol{x}_e \in \mathbb{R}^{25}$, $\forall e \in \mathcal{E}$ and $\boldsymbol{\theta} \in \mathbb{R}^{25}$ is generated randomly from uniform distribution $U(0,0.1)$ and then feature vectors are normalized. The average activation probability over edges is 0.175. 
\begin{figure}
    \centering
    \includegraphics[width = 0.5\linewidth]{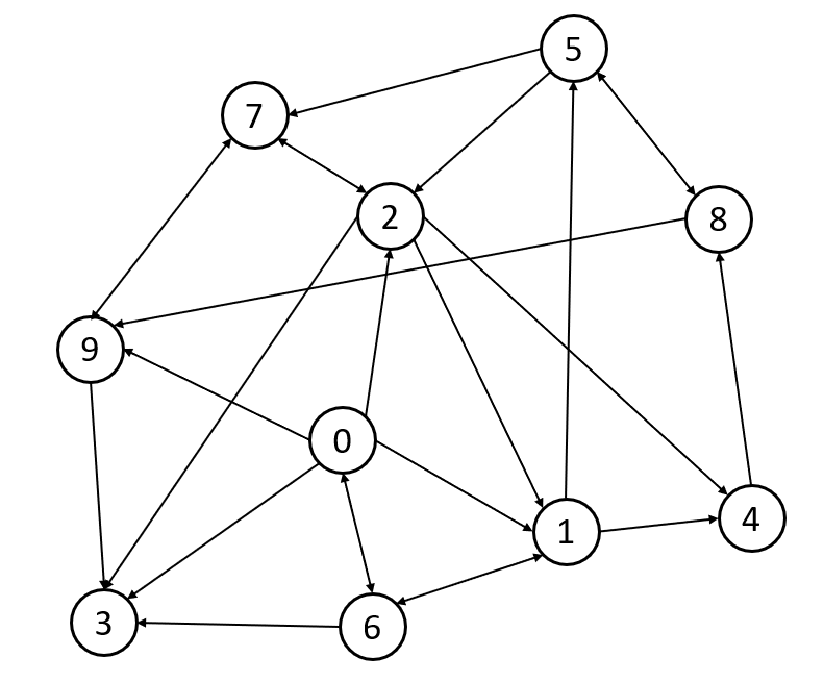}
    \caption{Network structure of Experiment I.}
    \label{fig:e1-network}
    \vspace{-10pt}
\end{figure}

In Figure~\ref{fig:e1-network}, Node 3 has no out-edges, whereas Node 1 has three in-edges and two out-edges. Therefore, the latter, if corrupted, can disseminate the corruption effect. Figure~\ref{fig:e0} shows the cumulative regret, where $\mathcal{U}=\{3\}$ and $\mathcal{U} = \{1\}$ in the legend indicate the corruption of Node 3 and Node 1, respectively.
\begin{figure}[!ht]
\centering
 \centering
  \includegraphics[width=.6\linewidth]{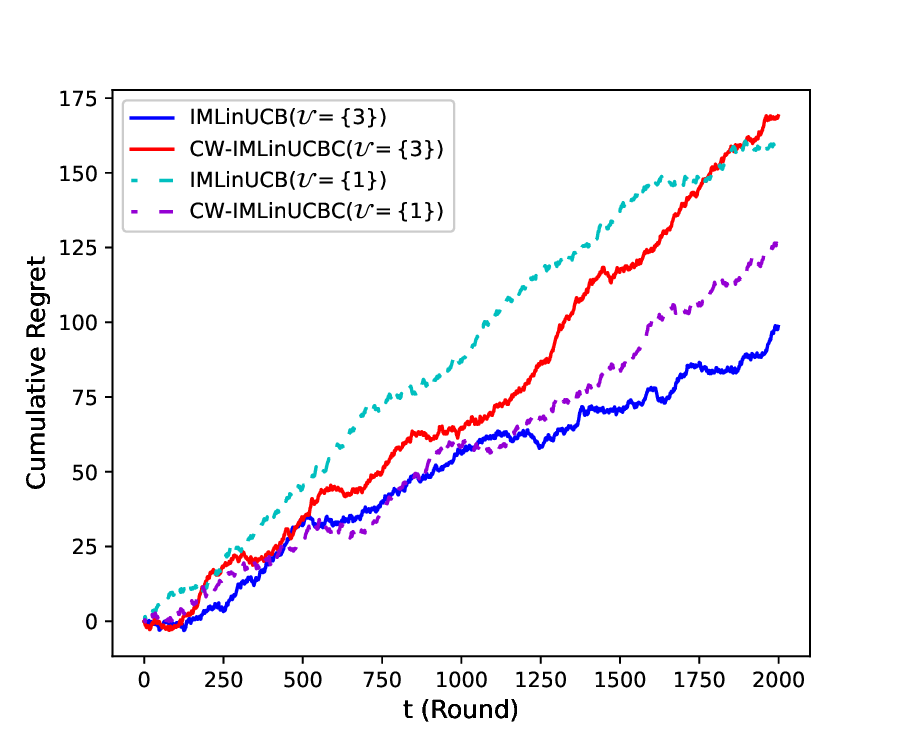} 
  \caption{Effects of various corrupted user positions ($n = 10$).}
\label{fig:e0}
\end{figure}

According to Figure~\ref{fig:e0}, when Node 3 is malicious, IMLinUCB performs better than CW-IMLinUCB. This is based on the fact that Node 3 cannot disseminate the corruption effect due to the network structure. Indeed, in this case, the extra term in the UCB of our corruption-robust algorithm adds an unnecessary exploration that degrades the performance w.r.t. the corruption-agnostic algorithm. In contrast, if Node 1 is corrupted, that term is vital, and our algorithm has a superior performance. More precisely, if Node $3$ is malicious, the additional regret of our algorithm stems from the overestimation of corruption within the upper confidence bound. In $\beta$, the second term $\sigma^{-2}\lambda E^c C$ denotes the potential spread of corruption throughout the whole network with maximal corruption budget to account for the uncertainty. However, such a scenario occurs only when the crucial nodes in the most vital positions act in a corrupted manner permanently. Thus, the over-estimated corruption in $\beta$ will increase exploration in the learning process. We emphasize that the selection of $\beta$ rests on the assumption that the positions of corrupted users remain unknown. This assumption aligns with real-world scenarios with hidden corrupted users whose exact positions cannot be identified. By assuming unknown positions, the term $\sigma^{-2}\lambda E^c C$ in $\beta$ remains indispensable.
\subsection{Experiments on Synthetic Dataset}
\label{sec:e1}
As described in Section~\ref{sec:problem}, the positions of the corrupted users play a crucial role. To demonstrate this, we first implement our synthetic dataset on the network with $n = 50$ and $K =2$. 

The network is an Erdős-Rényi graph and creates possible edges with probability 0.3. It has in total $m = 687$ edges. The corrupted users trick the learning algorithm by changing the activation probability to become $p(e) = \max(0,0.05 - \boldsymbol{x}_e^T\boldsymbol{\theta})$ for the first $C_T = 200$ rounds similar to the toy example. In the remaining rounds, the corrupted user acts normally. The activation probabilities in normal manner follow \eqref{eq:p-normal} in Assumption~\ref{assump:clb-p}. The generation of feature vectors $\boldsymbol{x}_e \in \mathbb{R}^{25}$, $\forall e \in \mathcal{E}$ and $\boldsymbol{\theta} \in \mathbb{R}^{25}$ follow the same setting as toy example. The average activation probability over edges is 0.0295. 

We first apply IMLinUCB and CW-IMLinUCB algorithms to two different corrupted user sets, namely, $\mathcal{U}=\{0,37\}$, selected at random, and $\mathcal{U}=\{34,36\}$, both of which are directly connected to the optimal seed set $\{7,43\}$ according to \textsc{ORACLE}.

\begin{figure}[!ht]
\centering
  \includegraphics[width=.6\linewidth]{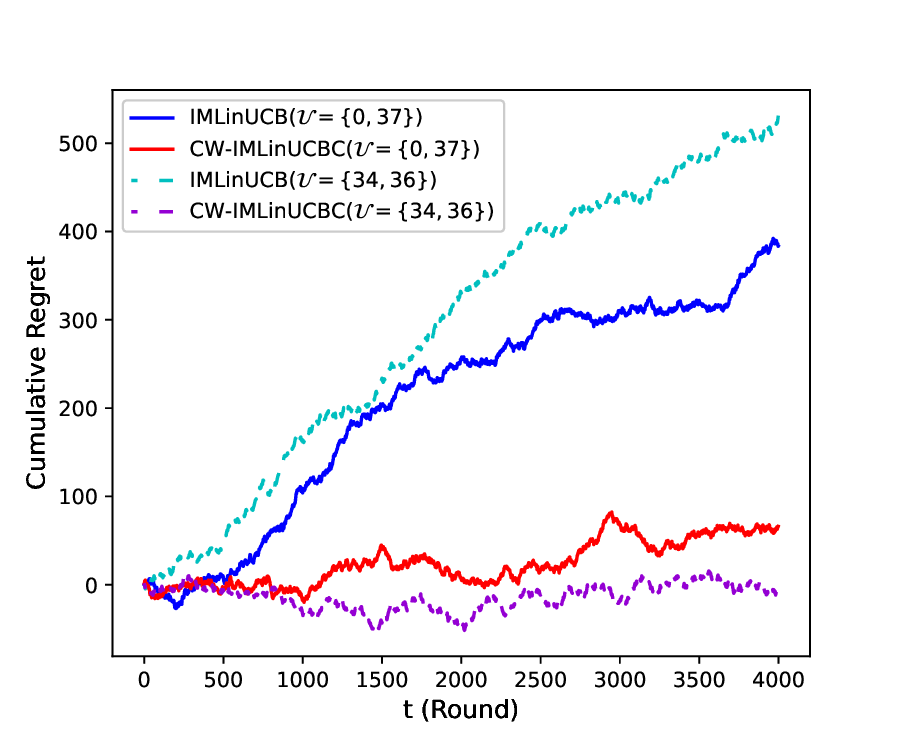}
\caption{Effects of various corrupted user positions ($n = 50$).}
\label{fig:e1.1}
\end{figure}
Figure~\ref{fig:e1.1} shows the cumulative regret with different corrupted users. Intuitively, the positions of Nodes 34 and 36 are more crucial than Nodes 0 and 37, so the corruption in those positions would cause more adverse effects. In Figure~\ref{fig:e1.1}, the performance of IMLinUCB verifies this point. When Nodes 34 and 36 are malicious, the regret of IMLinUCB is higher, and the selected seed set activates fewer nodes in the network compared to the situation when Nodes 34 and 36 are malicious. CW-IMLinUCB algorithm can overcome this difficulty. Thus, it outperforms IMLinUCB in both cases. In addition, in some rounds, the cumulative regret is below zero. That is because the algorithm calculates regret by performing the independent cascading process for both the optimal seed set and the seed set selected by our algorithm, then comparing the number of activated nodes at each time step. Due to the properties of independent cascading, the number of activated nodes can vary from round to round. Consequently, there is a possibility that the seed set selected by our algorithm might activate more nodes than the optimal seed set.

We also evaluate the performance of our proposed algorithm compared to the state-of-the-art algorithms under the same setting in previous experiment ($K = 2$, Nodes 0 and 37 are corrupted users). To guarantee that the activation probability is exactly $p(e = (u,v)) = \boldsymbol{x}_e^T\boldsymbol{\theta} = \boldsymbol{x}_v^T\boldsymbol{\theta}_u$ for all the implemented algorithms, we follow the setting in \cite{wu2019factorization}: We first randomly sample $\boldsymbol{x}_v \in \mathbb{R}^{d_1}$ and $\boldsymbol{\theta}_u \in \mathbb{R}^{d_1}$ for DILinUCB and OIMLinUCB algorithms. Then, for each edge $e = (u,v) \in \mathcal{E}$, we take the outer product on $\boldsymbol{x}_v$ and $\boldsymbol{\theta}_u$ and reshape it into a $d = d_1\times d_1$-dimensional vector, which is the edge feature vector $\boldsymbol{x}_e$ in the IMLinUCB algorithm with $d_1 = d_2 = 5$. Therefore, the IMLinUCB only needs to recognize the diagonal terms in the outer product. Figure~\ref{fig:e1.2-rgt} and \ref{fig:e1.2-reward} show the cumulative regret and the average reward, respectively. 
\begin{figure}[!ht]
\centering
\begin{subfigure}{.275\textwidth}
 \centering
  \includegraphics[width=.99\linewidth]{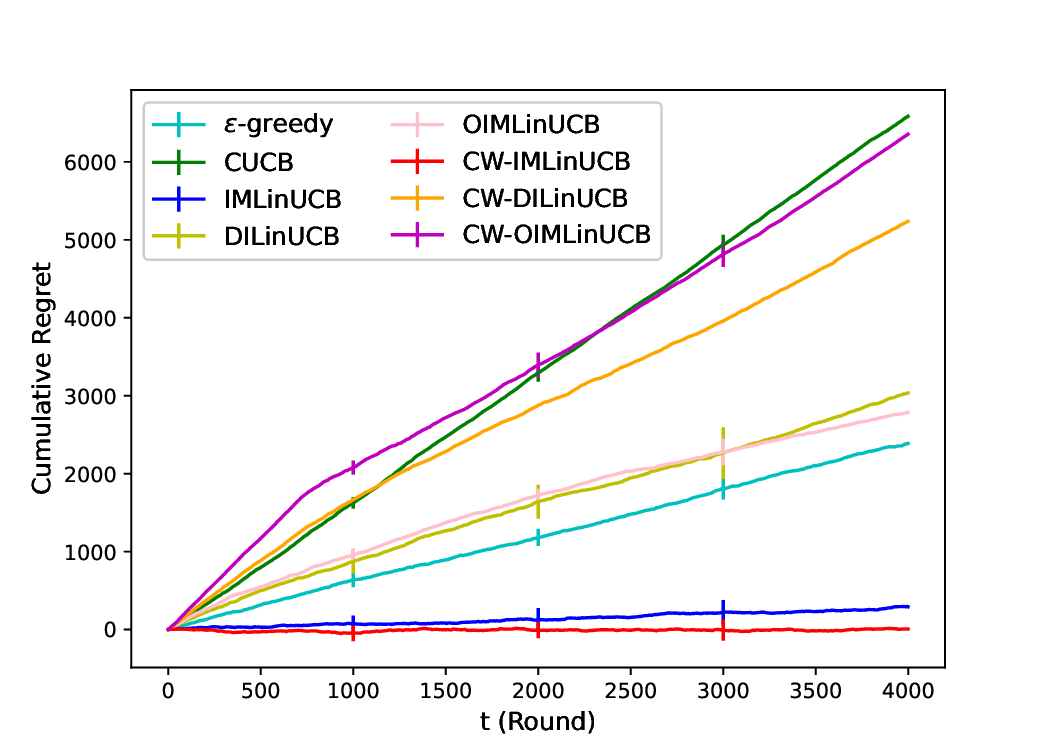}
  \caption{Cumulative regret.}
  \label{fig:e1.2-rgt}
\end{subfigure}
\begin{subfigure}{.195\textwidth}
  \centering
  \includegraphics[width=.99\linewidth]{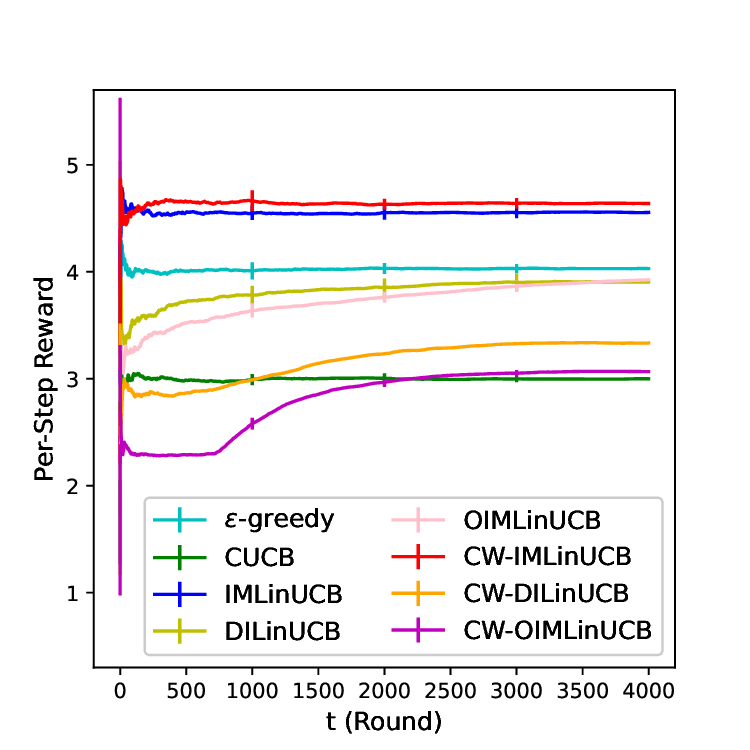} 
  \caption{Per-step reward.}
  \label{fig:e1.2-reward}
\end{subfigure}
\caption{Result of Experiment I.}
\label{fig:e1.2}
\end{figure}

Figure~\ref{fig:e1.2} shows that our proposed algorithm has the lowest regret and the highest reward compared to all other methods. The algorithms CW-DILinUCB and CW-OIMLinUCB have higher regret compared to DILinUCB and OIMLinUCB. Although CW-DILinUCB and CW-OIMLinUCB integrate the weighted regression into the algorithm, their performance is not as good as expected. 

\subsection{Experiments on Real World Dataset}
We implement our algorithm in a subgraph of the Facebook network data \cite{snapnets}. The dataset has 4039 nodes and 88234 edges, and the subgraph includes the first $n = 300$ nodes and $|\mathcal{E}| = 2046$ edges. The average edge activation probability on this subgraph is 0.0497. Several authors use the Facebook dataset to evaluate the effectiveness of OIM algorithms \cite{vaswani2015influence,vaswani2017model}. A challenge is the absence of information about edge activation probabilities, seed set sizes, and other parameters. To address that, we use the same approach in previous work \cite{vaswani2015influence}, treating the sampled values as the ground truth. The data sampling process is as follows.

We first choose $K = 20$ for the seed set. There are $n_c = 20$ randomly-selected corrupted users in the network. The generation of feature vectors is the same as Experiment I in Section~\ref{sec:e1}. The corrupted users also follow the previous setting with $C_T = 1000$. Figure~\ref{fig:e2-rgt} and \ref{fig:e2-reward} respectively show the cumulative regret and average reward, i.e., the number of per-step activated users. 
\begin{figure}[!ht]
\centering
\begin{subfigure}{.275\textwidth}
 \centering
  \includegraphics[width=.99\linewidth]{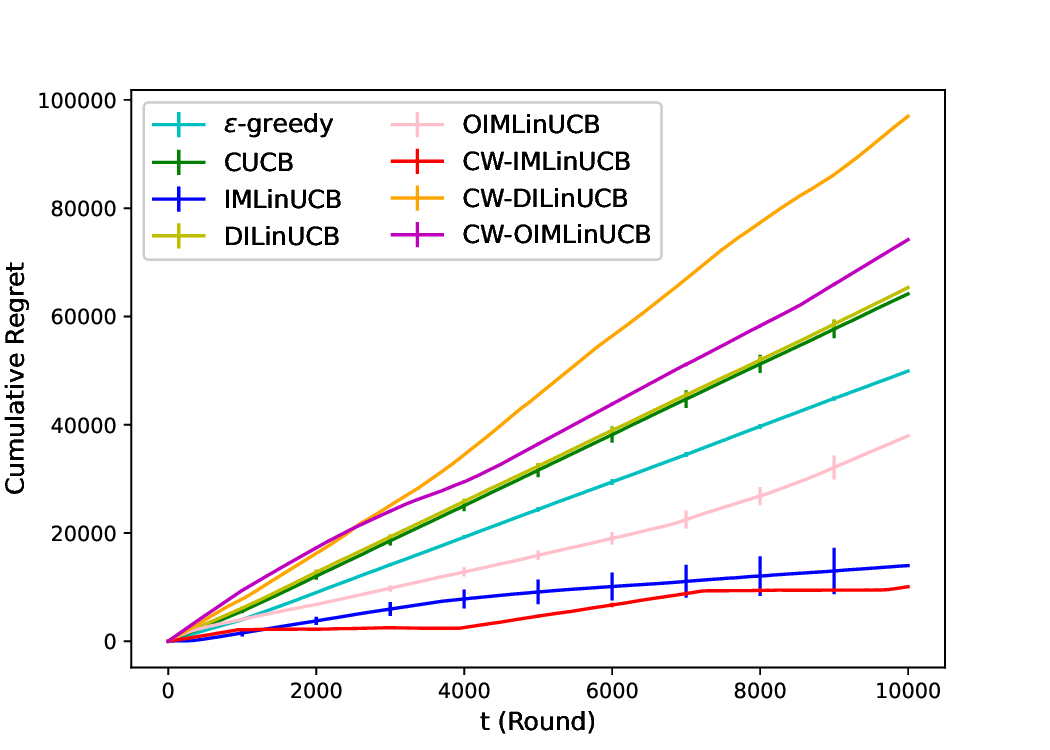}
  \caption{Cumulative regret.}
  \label{fig:e2-rgt}
\end{subfigure}
\begin{subfigure}{.195\textwidth}
  \centering
  \includegraphics[width=.99\linewidth]{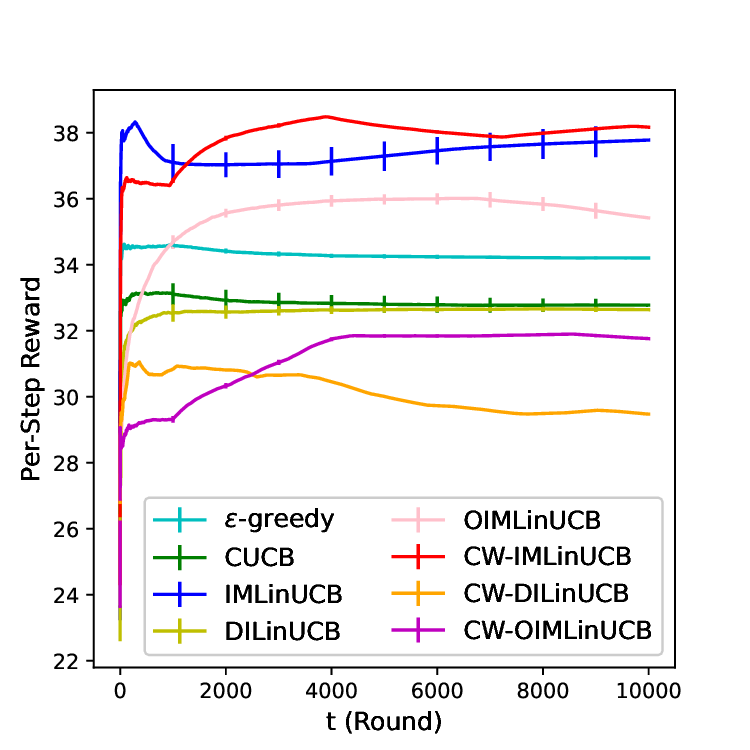} 
  \caption{Per-step reward.}
  \label{fig:e2-reward}
\end{subfigure}
\caption{Result of Experiment II.}
\label{fig:e2}
\end{figure}

Moreover, we evaluate the performance of our proposed algorithm under different corruption levels. Similar to previous numerical simulations, the confidence weighted regression is not compatible to DILinUCB and OIMLinUCB algorithms, hence we omit the plot of CW-DILinUCB and CW-OIMLinUCB in the following experiments. Figure~\ref{fig:e2-cor1} shows the regret of all algorithms under different time horizons of the corruption $C_T$ with a fixed set of $n_c = 20$ corrupted users. Figure~\ref{fig:e2-cor2} shows the performance of our algorithm when the number of corrupted users $n_c$ changes while $C_T = 1000$ remains fixed. Particularly, when $n_c \geq 10$, the experiments share the same ten corrupted users. For each experiment, we add the randomly selected users to the previous corrupted set of users. Both experiments have $K = 20$. 
\begin{figure}[!ht]
\centering
\begin{subfigure}{.24\textwidth}
 \centering
  \includegraphics[width=.99\linewidth]{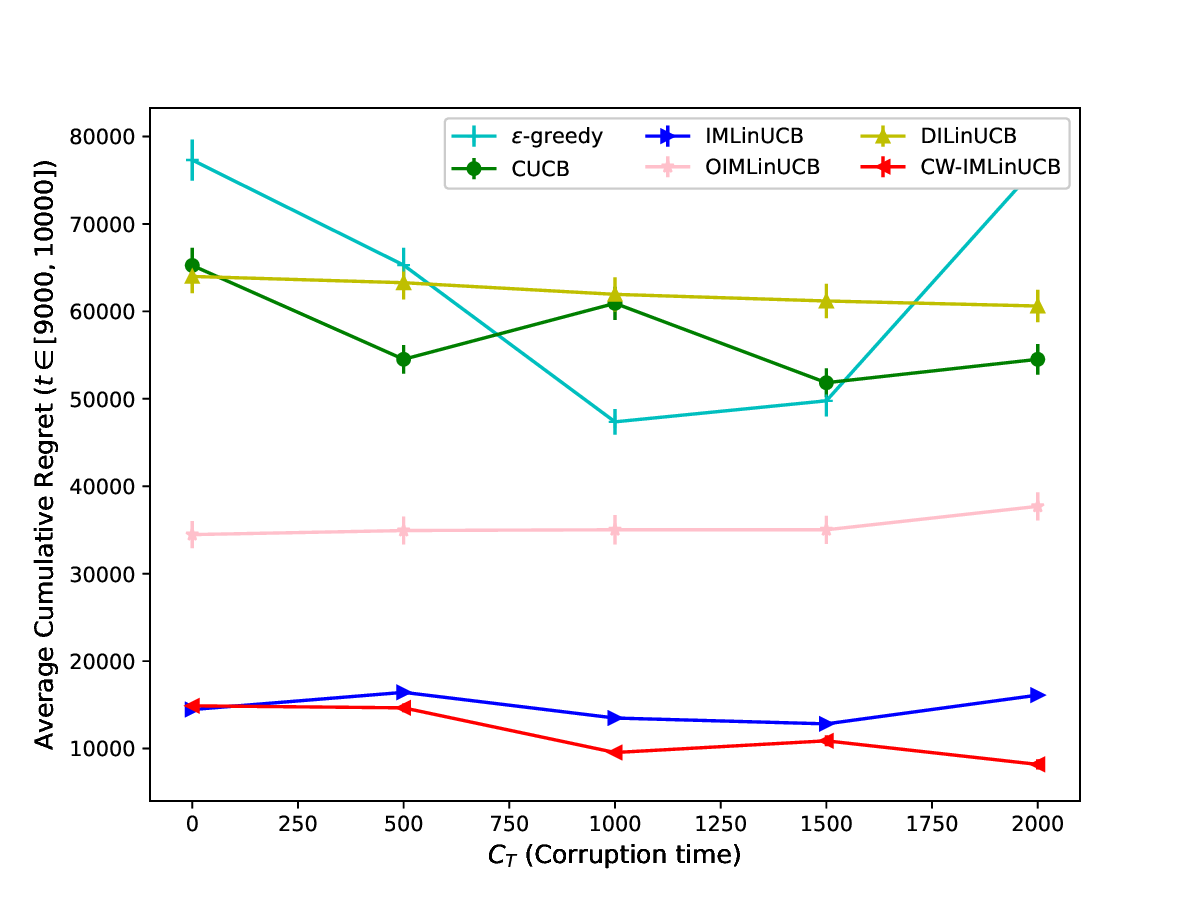}
  \caption{Regret under different corruption time.}
  \label{fig:e2-cor1}
\end{subfigure}
\begin{subfigure}{.24\textwidth}
  \centering
  \includegraphics[width=.99\linewidth]{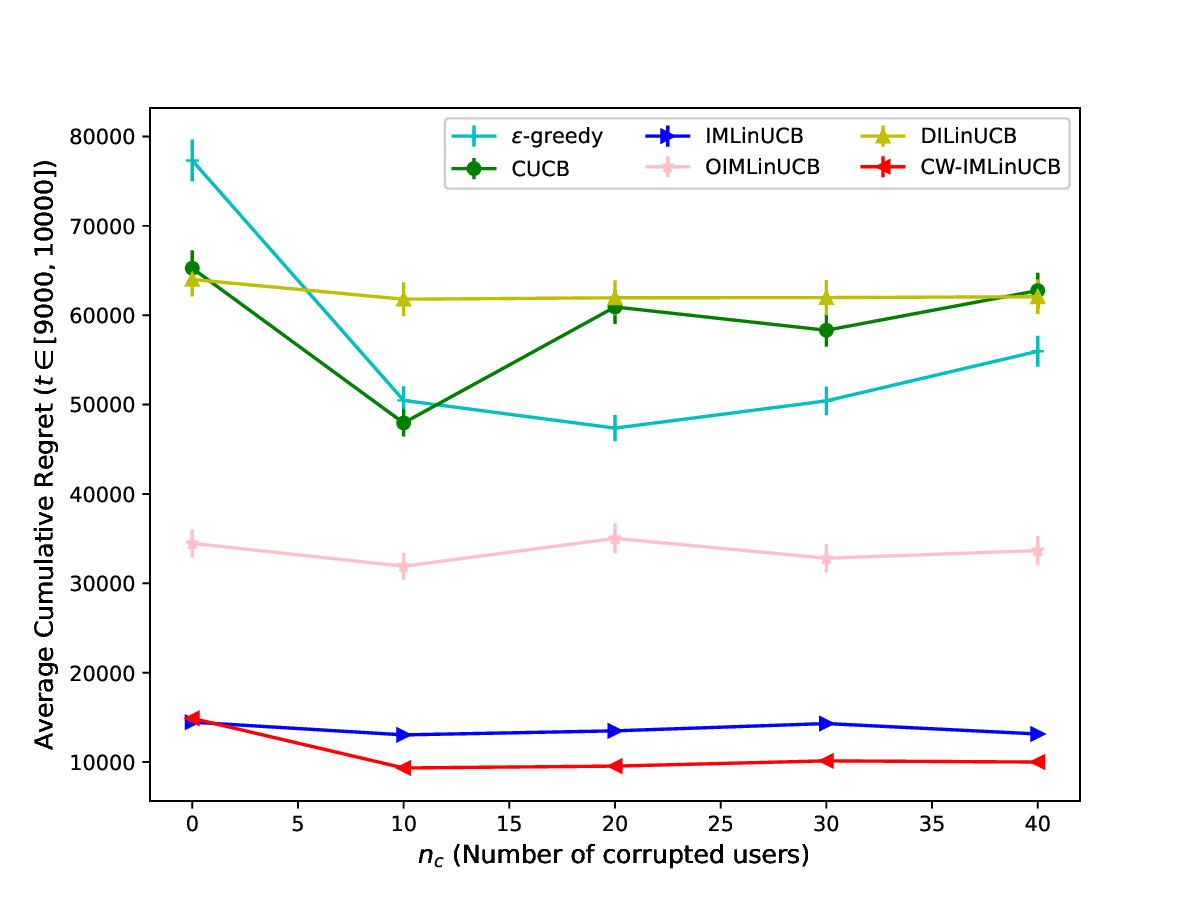} 
  \caption{Regret under different number of corrupted users}
  \label{fig:e2-cor2}
\end{subfigure}
\caption{Result of different corruption levels.}
\label{fig:e3}
\end{figure}

In addition, we evaluate the time complexity and memory requirements of our proposed algorithm. In all experiments across various networks, we generated the network using the same process as in the experiments on the synthetic dataset. We conducted all the experiments with a horizon of $T = 5000$ and the corruption time $C_T = 200$. Figure~\ref{fig:e4} shows the results.
\begin{figure}[!ht]
\centering
\begin{subfigure}{.24\textwidth}
 \centering
  \includegraphics[width=.99\linewidth]{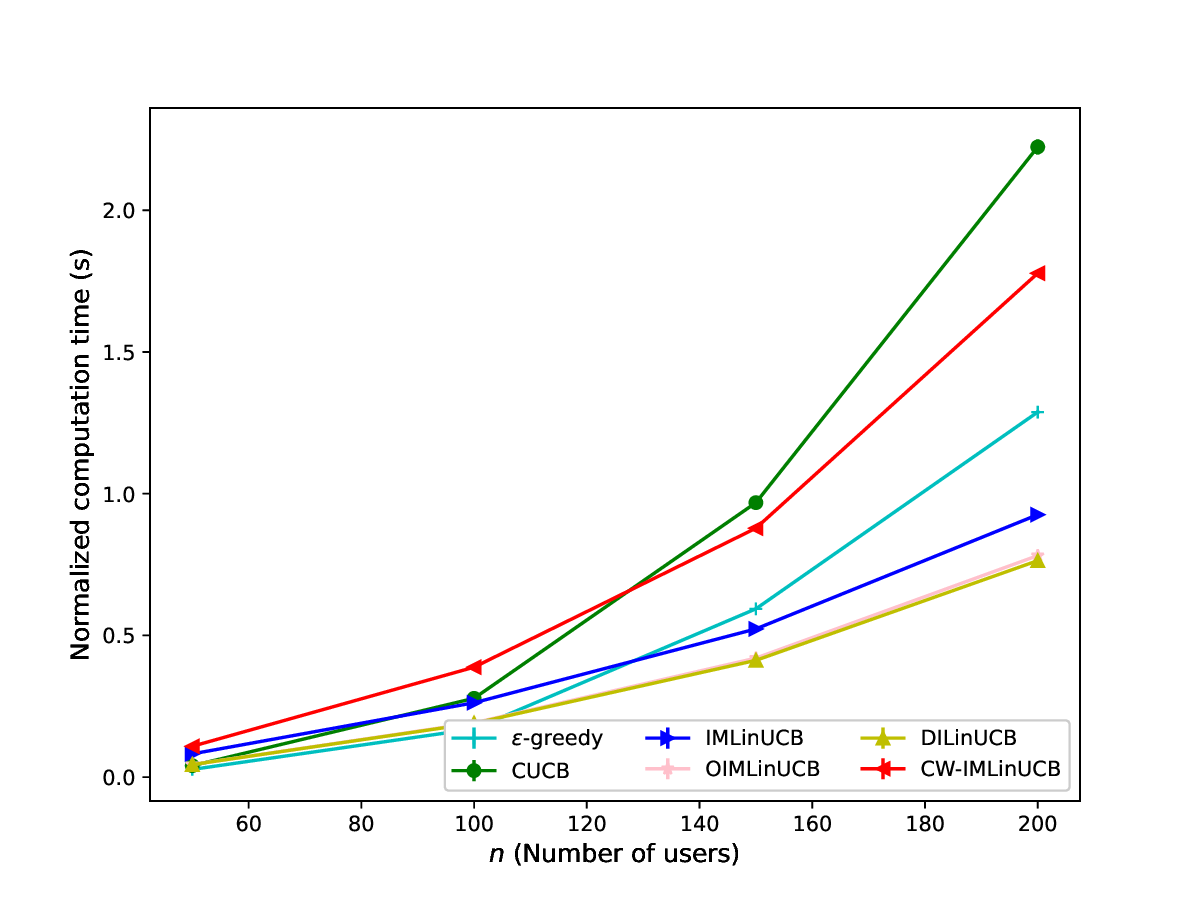}
  \caption{Normalized computation time.}
  \label{fig:e4-tc}
\end{subfigure}
\begin{subfigure}{.24\textwidth}
  \centering
  \includegraphics[width=.99\linewidth]{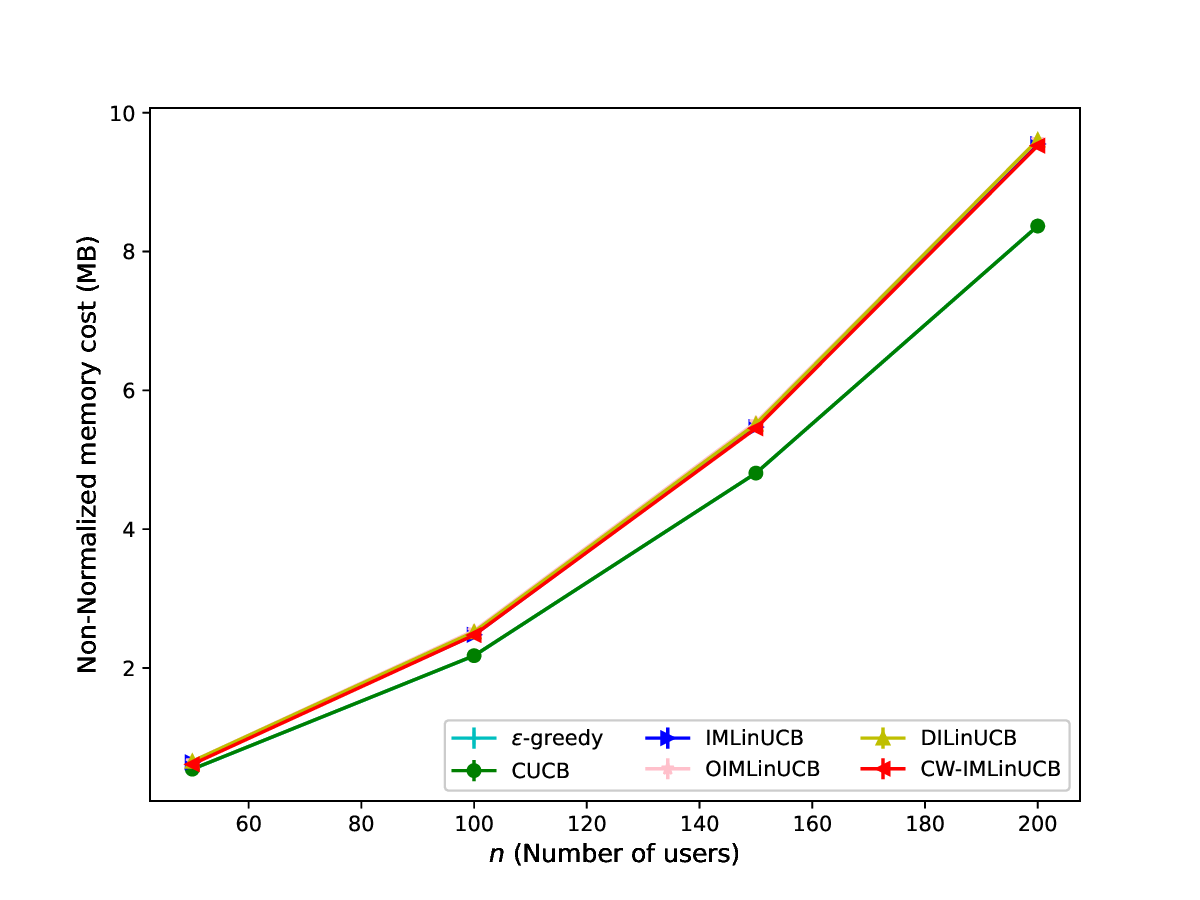} 
  \caption{Non-normalized memory cost}
  \label{fig:e4-mc}
\end{subfigure}
\caption{Complexity analysis under different networks.}
\label{fig:e4}
\end{figure}

From the experimental results, we conclude the followings:
\begin{itemize}
\item  In all experiments, our proposed algorithm, CW-IMLinUCB, outperforms other methods. 
\item As Figure~\ref{fig:e2} shows, CW-IMLinUCB has several inflection points in the regret plot. That is because the algorithm uses the weighted regression, i.e., a small weight for a large confidence radius, and vice versa. That prevents a potentially large regret caused by the corruption and consequently guarantees the superior performance of CW-IMLinUCB. Therefore, the abrupt change of the upper confidence bound caused by the weight change influences the seed set and reflects the inflection points in the plot. 
\item Compared to DILinUCB and OIMLinUCB, enhancing CW-DILinUCB and CW-OIMLinUCB with weighted ridge regression does not improve their performance compared to the vanilla version. The reason is that the weighted ridge regression framework is effective in dealing with the corruption effect under our proposed framework whereas its applicability to other frameworks in online influence maximization remains limited. Thus, our proposed structure is more compatible with IMLinUCB framework and exploits the weighted regression's strengths.
\item According to Figure~\ref{fig:e3}, when there is no corruption ($C_T = 0$ or $n_c = 0$), the performance of IMLinUCB is better than CW-IMLinUCB. The higher regret of CW-IMLinUCB here comes from the second $\sigma^{-2}\lambda E^c C$ of $\beta$. This term estimates the corruption effect within the upper confidence bound, however, in scenarios without corruption, it adds unnecessary exploration and degrades the performance of CW-IMLinUCB. 
\item According to Figure~\ref{fig:e2-cor1}, the benefits of CW-IMLinUCB become increasingly pronounced with longer corruption intervals when compared to state-of-the-art algorithms. The small fluctuation of the plot reflects the uncertainties in the diffusion process. This shows the robustness of CW-IMLinUCB algorithm in dealing with different corruption levels and its superiority compared to state-of-the-art algorithms.  
\item In Figure~\ref{fig:e2-cor2}, unlike Figure~\ref{fig:e2-cor1}, although CW-IMLinUCB has the lowest regret with different $n_c$, its regret does not significantly change among different experiments and just increased small amount with increased number of corrupted users. That is because the number of corrupted users is not the sole- or main determinant of the regret. Rather, it is the placement of corrupted nodes within the network that significantly influences the extent of regret. Merely increasing the number of corrupted nodes, without putting the corrupted node in influential positions, does not necessarily yield more corruption effect, thus the regret performance might remain steady. 
\item Compared with CW-IMLinUCB, there is no obvious trend in the regret plots of other algorithms in Figure~\ref{fig:e3}. The reason is that other algorithms do not take corruption into account, which also increases uncertainty in the behavior of their learning processes in facing corruptions.  
\item As shown in Figure~\ref{fig:e4}, the computation time and memory cost of CW-IMLinUCB increase with the network size, a trend observed across all state-of-the-art algorithms. The time cost of our proposed CW-IMLinUCB algorithm is slightly higher than that of other state-of-the-art algorithms. However, its memory cost is comparable to other algorithms with linear bandit structures, such as IMLinUCB, DILinUCB, and OIMLinUCB.
\end{itemize}
\section{Conclusion and Discussion}\label{sec:conclusion}
In this work, we study the OIM problem in presence of corrupted users. We propose an algorithm CW-IMLinUCB that integrates the weighted ridge regression into an OIM algorithm with bandit feedback. We present a theoretical guarantee for our proposed algorithm. Experiments on synthetic- and real-world datasets confirm the effectiveness and robustness of our proposed algorithm. According to our analysis, the position of the corrupted users can influence the regret bound. Currently, our regret bound only considers the worst-case that the corrupted users can easily disseminate the corruption effect across the network. In future work, one can consider to integrate corrupted user detection mechanisms into our framework and consider the corrupted user location dependent algorithm. Thus, the regret bound could be further tighter. Furthermore, developing an algorithm that competes with a dynamically corrupting algorithm is another interesting research direction. 



\appendix
\subsection{Proof of Lemma~\ref{lem:conf-b}}\label{app:proof-lem}
Here we first present an auxiliary lemma in the following.
\begin{lemma}\label{lem:inv}
Let $A, B$ be a Hermitian matrix in $\mathbb{R}^{d\times d}$ and suppose $A,B \succ 0$, then $A \succeq B$ if and only if $B^{-1} \succeq A^{-1}$ \cite{horn2012matrix}.
\end{lemma}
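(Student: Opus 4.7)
}
The plan is to reduce the two-sided statement $A \succeq B \iff B^{-1} \succeq A^{-1}$ to the corresponding one-sided statement for $I$ versus a positive definite Hermitian matrix (namely, $I \succeq M \iff M^{-1} \succeq I$) by a congruence conjugation with $A^{1/2}$. Because the hypothesis is symmetric in the roles of $A$ and $B$ once we pass to the inverse, it suffices to prove only one implication, say ``$\Rightarrow$''; the reverse then follows by applying the forward direction to the pair $(B^{-1}, A^{-1})$, which are still positive definite Hermitian, and using $(B^{-1})^{-1}=B$, $(A^{-1})^{-1}=A$.

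For the forward direction, I would first use the spectral theorem to define the unique positive definite square root $A^{1/2}$ and its inverse $A^{-1/2}$. The congruence $X \mapsto A^{-1/2} X A^{-1/2}$ preserves the Löwner order on Hermitian matrices (since $y^* (A^{-1/2} X A^{-1/2}) y = (A^{-1/2}y)^* X (A^{-1/2}y)$), so from $A \succeq B$ I conclude
\begin{equation*}
I \;=\; A^{-1/2} A\, A^{-1/2} \;\succeq\; A^{-1/2} B\, A^{-1/2} \;=:\; M.
\end{equation*}
The matrix $M$ is Hermitian and, since $B \succ 0$ and $A^{-1/2}$ is nonsingular, $M \succ 0$.

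Next, for any positive definite Hermitian $M$ with spectral decomposition $M = U\Lambda U^*$ (with $\Lambda = \mathrm{diag}(\lambda_1,\dots,\lambda_d)$, $\lambda_i>0$), the statement $I \succeq M$ is equivalent to $\lambda_i \le 1$ for all $i$, which is equivalent to $1/\lambda_i \ge 1$ for all $i$, i.e., $M^{-1} \succeq I$. Inverting the definition of $M$ gives $M^{-1} = A^{1/2} B^{-1} A^{1/2}$, so $M^{-1}\succeq I$ reads $A^{1/2} B^{-1} A^{1/2} \succeq I$. Conjugating this inequality by $A^{-1/2}$ (again a congruence, hence order-preserving) yields $B^{-1} \succeq A^{-1}$, which is what I want.

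The main obstacle is essentially bookkeeping rather than any deep idea: I must justify carefully that (i) the positive definite square root exists and is invertible, (ii) congruence by a nonsingular matrix preserves the Löwner order in both directions, and (iii) the equivalence $I \succeq M \iff M^{-1} \succeq I$ requires $M \succ 0$ (not merely $M \succeq 0$), which is exactly what the hypothesis $B \succ 0$ guarantees after conjugation. Once these three facts are in place, the chain of equivalences above runs symmetrically and gives both directions without any extra work.
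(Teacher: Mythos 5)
Your proof is correct. The paper does not actually prove this lemma---it is stated as a known fact with a citation to Horn and Johnson---and your congruence-by-$A^{-1/2}$ plus spectral-theorem argument is precisely the standard proof of that textbook result, so your write-up is a complete and self-contained justification of what the paper merely cites.
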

The proof of Lemma~\ref{lem:conf-b} is as follows.
\begin{proof}
Let $\mathcal{H}_t$ denote the history by the end of time step $t$ hence $\{\mathcal{H}_t\}_{t=0}^{\infty}$ is a filtration. Note that $\hat{p}_t$ is $\mathcal{H}_{t-1}$-adaptive and $\mathcal{S}_t$ is also $\mathcal{H}_{t-1}$-adaptive\cite{vaswani2017model}. For $t \in \{1,2,\ldots, T\}$ and $e= (u,v) \in \bar{\mathcal{E}}_t$, define
\begin{gather*}
    \eta_t(e) = y_t(e) - \boldsymbol{\theta}^T\boldsymbol{x}_e - c_{u,t}. \label{eq:eta}
\end{gather*}
Note that $\eta_t(e)$ is $\mathcal{H}_{t-1}$-adaptive. Besides, it satisfies  $\norm{\eta_t(e)} \leq 1$ and $\mathbb{E}[\eta_t(e)|\mathcal{H}_{t-1}] = 0$ \cite{vaswani2017model}, hence they are conditionally sub-Gaussian with constant $R = 1$ \cite{wen2017online}. We further define 
\begin{align*}
\boldsymbol{V}_{t} &= \sigma^2\boldsymbol{M}_{t} = \sigma^2 \boldsymbol{I} + \sum_{\tau=1}^t \sum_{e \in \tilde{\mathcal{E}}_{\tau}}\omega_{e,\tau}\boldsymbol{x}_e\boldsymbol{x}_e^T \notag \\
\boldsymbol{S}_{t} &= \sum_{\tau=1}^t\sum_{e \in \tilde{\mathcal{E}}_{\tau}} \omega_{e,\tau}\boldsymbol{x}_e\eta_{\tau}(e) \notag \\
&= \boldsymbol{b}_{t} - \sigma^2(\boldsymbol{M}_{t} - I)\boldsymbol{\theta} - \sum_{\tau=1}^t\sum_{e \in \tilde{\mathcal{E}}_{\tau}}\omega_{e,t}\boldsymbol{x}_ec_{u,t}.
\end{align*}
Thus we have 
\begin{align*}
\hat{\boldsymbol{\theta}}_{t} - \boldsymbol{\theta} = \boldsymbol{M}_{t}^{-1}(\sigma^{-2}\boldsymbol{S}_{t} + \sigma^{-2}\sum_{\tau=1}^t\sum_{e \in \tilde{\mathcal{E}}_{\tau}}\omega_{e,\tau}\boldsymbol{x}_ec_{u,\tau} - \boldsymbol{\theta}). 
\end{align*}
It implies
\begin{align}
|\boldsymbol{x}_e^T(\hat{\boldsymbol{\theta}}_{t-1} - \boldsymbol{\theta})| &\leq |\boldsymbol{x}_e\boldsymbol{M}_{t-1}^{-1}\boldsymbol{\theta}| + \sigma^{-2}|\boldsymbol{x}_e\boldsymbol{M}_{t-1}^{-1}\boldsymbol{S}_{t-1}| \notag \\
& \quad + \sigma^{-2}|\boldsymbol{x}_e\boldsymbol{M}_{t-1}^{-1}\sum_{\tau=1}^t\sum_{e \in \tilde{\mathcal{E}}_{\tau}}\omega_{e,\tau}\boldsymbol{x}_ec_{u,\tau}| \notag \\
&\leq \norm{\boldsymbol{x}_e}_{\boldsymbol{M}_{t-1}^{-1}}(\norm{\boldsymbol{\theta}}_2 + \underbrace{\sigma^{-2}\norm{\boldsymbol{S}_{t-1}}_{\boldsymbol{M}_{t-1}^{-1}}}_{\text{stochastic error}} \notag \\
& \quad + \underbrace{\sigma^{-2}\norm{\sum_{\tau=1}^{t-1}\sum_{e \in \tilde{\mathcal{E}}_{\tau}}\omega_{e,\tau}\boldsymbol{x}_ec_{u,\tau}}_{\boldsymbol{M}_{t-1}^{-1}}}_{\text{corruption error}}), \label{eq:dec} 
\end{align}
where \eqref{eq:dec} is based on Cauthy-Schwarz inequality and matrix operator inequality. 

The stochastic error can be bounded by the concentration Lemma J.2 in \cite{abbasi2011improved}: According to the Appendix I.1 in \cite{he2022nearly}, we introduce the auxiliary vectors $\tilde{\boldsymbol{x}}_{e,t} = \sqrt{\omega_{e,t}}\boldsymbol{x}_e$ and $\tilde{\eta}_t(e) = \sqrt{\omega_{e,t}}\eta_t(e)$. Then, it holds
\begin{gather}
\norm{\tilde{\boldsymbol{x}}_{e,t}}_2 \leq 1, \quad \tilde{\eta}_t(e) \text{ is R-sub Gaussian}, R \leq 1.
\end{gather}
With this notation, with probability at least $1-\delta$, it holds
\begin{align*}
\sigma^{-2}\norm{\boldsymbol{S}_{t-1}}_{\boldsymbol{M}_{t-1}^{-1}} &= \sigma^{-2}\norm{\sum_{\tau=1}^t\sum_{e \in \tilde{\mathcal{E}}_{\tau}} \omega_{e,\tau}\boldsymbol{x}_e\eta_{\tau}(e)}_{\boldsymbol{M}_{t-1}^{-1}} \notag \\
&= \sigma^{-2}\norm{\sum_{\tau=1}^{t-1}\sum_{e \in \tilde{\mathcal{E}}_{\tau}} \tilde{\boldsymbol{x}}_{e,\tau}\tilde{\eta}_{\tau}(e)}_{\boldsymbol{M}_{t-1}^{-1}} \notag \\
\leq &\sigma^{-2}\sqrt{2\log(\frac{\det(\boldsymbol{M}_{t-1})^{1/2}\det(\boldsymbol{M}_{0})^{-1/2}}{\delta})}, 
\end{align*}
where the last inequality is based on the Lemma J.2 and Theorem 1 in \cite{abbasi2011improved}. It is satisfied with probability at least $1-\delta$ with $\delta \in (0,1)$. Notice that $\det(\boldsymbol{M}_{0}) = \det(I) = 1$. Moreover, from the trace-determinant inequality, we have
\begin{align*}
\det(\boldsymbol{M}_{t-1})^{1/d} \leq \frac{\Tr(\boldsymbol{M}_{t-1})}{d} &= 1 + \frac{1}{d}\sum_{\tau=1}^{t-1}\sum_{e \in \tilde{\mathcal{E}}_{\tau}} \tilde{\boldsymbol{x}}_{e,\tau}\tilde{\boldsymbol{x}}_{e,\tau}^T \notag \\
&\leq 1 + \frac{(t-1)E^*}{d},
\end{align*}
where the last inequality follows from the assumption that $\norm{\tilde{\boldsymbol{x}}_{e,t}}_2 \leq 1$ and $|\tilde{\mathcal{E}}_t| \leq E^*$. Thus, with probability at least $1-\delta$, the stochastic error can be bounded by
\begin{gather*}
    \sigma^{-2}\norm{\boldsymbol{S}_{t-1}}_{\boldsymbol{M}_{t-1}^{-1}} \leq \sigma^{-2}\sqrt{d\log(1+\frac{tE^*}{d}) + 2\log (\frac{1}{\delta})}. 
\end{gather*}

The corruption error can be bounded by
\begin{align*}
&\quad \sigma^{-2}\norm{\sum_{\tau=1}^{t-1}\sum_{e \in \tilde{\mathcal{E}}_{\tau}}\omega_{e,\tau}\boldsymbol{x}_ec_{u,\tau}}_{\boldsymbol{M}_{t-1}} \notag \\
&\leq \sigma^{-2}\sum_{\tau=1}^{t-1}\sum_{e \in \tilde{\mathcal{E}}_{\tau}}\norm{ \boldsymbol{M}_{t-1}^{-1/2}\omega_{e,\tau}\boldsymbol{x}_ec_{u,\tau}}_2 \notag \\
&= \sigma^{-2}\sum_{\tau=1}^{t-1}\sum_{e \in \tilde{\mathcal{E}}_{\tau}} |c_{u,\tau}| \times \omega_{e,\tau} \norm{ \boldsymbol{M}_{t-1}^{-1/2}\boldsymbol{x}_e}_2 \notag \\
&\leq \sigma^{-2}\sum_{\tau=1}^{t-1}\sum_{e \in \tilde{\mathcal{E}}_{\tau}} |c_{u,\tau}| \lambda \leq \sigma^{-2}\lambda E^c C, 
\end{align*}
where the first inequality holds due to the inequality that $\norm{a+b}_2 \leq \norm{a} + \norm{b}$. The second one holds due to the definition of $\omega_{e,t}$. By the assumption $\norm{\boldsymbol{\theta}} \leq \Theta$, and after substituting the results, we arrive at
\begin{align}
&|\boldsymbol{x}_e^T(\hat{\boldsymbol{\theta}}_{t-1} - \boldsymbol{\theta})| \notag \\
&\leq \norm{\boldsymbol{x}_e}_{\boldsymbol{M}_{t-1}^{-1}}(\sigma^{-2}\sqrt{d\log(1+\frac{E^*T}{d}) + 2\log (\frac{1}{\delta})} \notag \\
&\quad \quad + \sigma^{-2}\lambda E^cC + \Theta)
\end{align}
\end{proof}

\subsection{Proof of Theorem~\ref{the:rgt}}\label{app:proof-the}
\begin{proof}
The scaled regret at time $t$ is $R^{\alpha \gamma}_t = f_{D,\boldsymbol{P}}(\mathcal{S}^{opt}) - \frac{1}{\alpha \gamma} f_{D,\boldsymbol{P}}(\mathcal{S}_t)$. Using the naive bound $R^{\alpha \gamma}_t \leq n -K$, the scaled cumulative regret can be decomposed into 
\begin{align*}
    R^{\alpha \gamma}(T) 
    &= \sum_{t=1}^T\mathbb{P}(\xi_{t-1})\mathbb{E}\Bigl\{[f_{D,\boldsymbol{P}}(\mathcal{S}^{opt}) - \frac{1}{\alpha \gamma} f_{D,\boldsymbol{P}}(\mathcal{S}_t)]\Big|\xi_{t-1} \Bigr\} \notag \\
    & + \sum_{t=1}^T\mathbb{P}(\bar{\xi}_{t-1})\mathbb{E}\Bigl\{[f_{D,\boldsymbol{P}}(\mathcal{S}^{opt}) - \frac{1}{\alpha \gamma} f_{D,\boldsymbol{P}}(\mathcal{S}_t)]\Big|\bar{\xi}_{t-1} \Bigr\} \notag \\
    &\leq \sum_{t=1}^T\mathbb{E}\Bigl\{[f_{D,\boldsymbol{P}}(\mathcal{S}^{opt}) - \frac{1}{\alpha \gamma} f_{D,\boldsymbol{P}}(\mathcal{S}_t)]\Big|\xi_{t-1} \Bigr\} \notag \\
    &\quad \quad + \sum_{t=1}^T\mathbb{P}(\bar{\xi}_{t-1})(n-K).
\end{align*}    
Select $\delta = \frac{1}{nT}$, then with probability at least $1-\frac{1}{nT}$, the good event $\xi_{t-1} = \Big \{|\boldsymbol{x}_e^T(\hat{\boldsymbol{\theta}}_{\tau-1} - \boldsymbol{\theta})| \leq \beta \sqrt{\boldsymbol{x}_e^T\boldsymbol{M}_{\tau-1}^{-1}\boldsymbol{x}_e},\forall e \in \mathcal{E}, \forall \tau \leq t \Big\}$ happens $\forall t \in \{1,2,\ldots, T\}$ with $\beta = \sigma^{-2}\sqrt{d\log(1+\frac{E^*T}{d}) + 2\log (nT)} + \sigma^{-2}\lambda E^cC + \Theta$ happens $\forall t \in [T]$. 

Besides, the \textsc{ORACLE} indicates that $f_{D,\hat{\boldsymbol{P}}_t}(\mathcal{S}_t) \geq \gamma \max_{\mathcal{S} \in \mathcal{V}}f_{D,\hat{\boldsymbol{P}}_t}(\mathcal{S})$ with probability at least $\alpha$. Thus, $\mathbb{E}[f_{D,\hat{\boldsymbol{P}}_t}(\mathcal{S}_t)] \geq \alpha \gamma \mathbb{E}[\max_{\mathcal{S} \in \mathcal{V}}f_{D,\hat{\boldsymbol{P}}_t}(\mathcal{S})]$. Under the good event $\xi_{t-1}$, we have $p(e) \leq \hat{p}_{\tau-1}(e)$, $\forall e \in \mathcal{E}$, $\forall \tau \leq t$. Thus, based on the monotonicity of $f$ in the probability weight, we have $\mathbb{E}[f_{D,\boldsymbol{P}}(\mathcal{S}^{opt})] \leq \mathbb{E} [f_{D,\hat{\boldsymbol{P}}_{t-1}}(\mathcal{S}^{opt})]$. Combining all the previous inequalities, we can obtain \cite{wen2017online,vaswani2017model}
\begin{align}
    &\mathbb{E}[f_{D,\boldsymbol{P}}(\mathcal{S}^{opt})] \leq \mathbb{E} [f_{D,\hat{\boldsymbol{P}}_{t-1}}(\mathcal{S}^{opt})] \notag \\
    & \quad \leq \mathbb{E} [\max_{\mathcal{S}: |\mathcal{S}| = K} f_{D,\hat{\boldsymbol{P}}_{t-1}}(\mathcal{S})] \leq \frac{1}{\alpha \gamma}\mathbb{E}[ f_{D,\hat{\boldsymbol{P}}_{t-1}}(\mathcal{S}_t)].
\end{align}
Therefore, under the good event $\xi_{t-1}$, $\forall t \in \{1, \ldots, T\}$ and according to the 1-Norm bounded smoothness condition, we have 
\begin{align}
    &\sum_{t=1}^T \Big [ f_{D,\boldsymbol{P}}(\mathcal{S}^{opt}) - \frac{1}{\alpha \gamma} f_{D,\boldsymbol{P}}(\mathcal{S}_t) \Big ] \notag \\
    &\leq \sum_{t=1}^T \Big [ \frac{1}{\alpha \gamma} f_{D,\hat{\boldsymbol{P}}_{t-1}}(\mathcal{S}_t) - \frac{1}{\alpha \gamma} f_{D,\boldsymbol{P}}(\mathcal{S}_t)  \Big ] \notag \\
    &\leq \frac{B}{\alpha \gamma}\sum_{t=1}^T \sum_{e \in \tilde{\mathcal{E}}_t}|\hat{p}_{e,t-1}-p_e| \notag \\
    &= \frac{B}{\alpha \gamma}  \sum_{t=1}^T \sum_{e \in \tilde{\mathcal{E}}_t}|\boldsymbol{x}_e^T(\hat{\boldsymbol{\theta}}_{t-1} - \boldsymbol{\theta}) + \beta\norm{\boldsymbol{x}_e}_{\boldsymbol{M}_{t-1}^{-1}}| \notag \\
    &\leq \frac{2B\beta}{\alpha \gamma} \sum_{t=1}^T \sum_{e \in \tilde{\mathcal{E}}_t} \sqrt{\boldsymbol{x}_e^T \boldsymbol{M}_{t-1}^{-1}\boldsymbol{x}_e}, \notag \\
    &= \frac{2B\beta}{\alpha \gamma}\Big [\underbrace{\sum_{t:\omega_{e,t}=1} \sum_{e \in \tilde{\mathcal{E}}_t} \sqrt{\boldsymbol{x}_e^T \boldsymbol{M}_{t-1}^{-1}\boldsymbol{x}_e}}_{I_1} \notag \\
    &\quad \quad + \underbrace{\sum_{t:\omega_{e,t}<1} \sum_{e \in \tilde{\mathcal{E}}_t} \sqrt{\boldsymbol{x}_e^T \boldsymbol{M}_{t-1}^{-1}\boldsymbol{x}_e}}_{I_2} \Big] \label{eq:rgt-i}
\end{align}
where $\tilde{\mathcal{E}}_t$ refers to the set of observed edges at time step $t$. 

\begin{definition}\label{def:obs}
For any time step $t$ and any directed edge $e \in \mathcal{E}$, we define the event
\begin{gather}
    O_t(e) = \{\text{edge $e$ is observed at round t}\}.
\end{gather}
%
\end{definition}
Based on the Definition~\ref{def:obs}, we have
\begin{gather}
    \sum_{e \in \tilde{\mathcal{E}}_t} \sqrt{\boldsymbol{x}_e^T \boldsymbol{M}_{t-1}^{-1}\boldsymbol{x}_e} = \sum_{e \in \mathcal{E}} \mathbbm{1}(O_t(e))\sqrt{\boldsymbol{x}_e^T \boldsymbol{M}_{t-1}^{-1}\boldsymbol{x}_e}.
\end{gather}
For the term $I_1$ defined in \eqref{eq:rgt-i}, we consider for all rounds $t \in [T]$, there exists $e \in \tilde{\mathcal{E}}_t$ with $\omega_{e,t} = 1$ and we assume these rounds can be listed as $\{t_1, t_2, \ldots, t_q\}$ for simplicity. With this notation, for each $i \leq q$, we can construct the auxiliary covariance matrix $\boldsymbol{A}_{i} = \boldsymbol{I} + \frac{1}{\sigma^2}\sum_{j=1}^{i}\sum_{e \in \mathcal{E}}\mathbbm{1}(O_{t_j}(e))\omega_{e,t_j}\boldsymbol{x}_{e}\boldsymbol{x}_{e}^T$ \cite{wen2017online}. According to the definition of matrix $\boldsymbol{M}_{t}$, we have 
\begin{gather}
    \boldsymbol{M}_{t_i} \succeq \boldsymbol{I} + \frac{1}{\sigma^2}\sum_{j=1}^{i}\sum_{e \in \mathcal{E}}\mathbbm{1}(O_{t_j}(e))\omega_{e,t_j}\boldsymbol{x}_{e}\boldsymbol{x}_{e}^T = \boldsymbol{A}_{u,i}.
\end{gather}
According to Lemma~\ref{lem:inv}, we have
\begin{gather}
    \boldsymbol{x}_{e}^T\boldsymbol{M}_{t_i}^{-1}\boldsymbol{x}_{e} \leq \boldsymbol{x}_{e}^T\boldsymbol{A}_{i}^{-1}x_{e}
\end{gather}
Therefore, the term $I_1$ defined in \eqref{eq:rgt-i} is bounded as shown in the following lemma. 
\begin{lemma}\label{lem:i_1}
For time step $t = 1, \ldots, T$, and $I_1$ as defined in \eqref{eq:rgt-i}, we have
\begin{gather}
    I_1 \leq \sqrt{\frac{TdE^* \log(1+\frac{TE^*}{d\sigma^2})}{\log(1+\frac{1^2}{\sigma^2})}}
\end{gather}
\end{lemma}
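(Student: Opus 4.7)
The plan is to combine the Cauchy--Schwarz inequality with an elliptical--potential argument applied to the auxiliary covariance $\boldsymbol{A}_i$ already introduced just above the lemma statement. First, since each of the $T$ rounds triggers at most $E^*$ edges, the total number of summands in $I_1$ is at most $TE^*$; applying Cauchy--Schwarz therefore yields
\begin{equation*}
I_1 \leq \sqrt{TE^*}\,\sqrt{\sum_{i=1}^{q}\sum_{e\in\mathcal{E}}\mathbbm{1}(O_{t_i}(e))\,\boldsymbol{x}_e^{T}\boldsymbol{M}_{t_i-1}^{-1}\boldsymbol{x}_e},
\end{equation*}
and it then suffices to bound the remaining inner sum by $d\log(1+TE^{*}/(d\sigma^{2}))/\log(1+1/\sigma^{2})$.

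To do so, I would first use the PSD dominance $\boldsymbol{M}_{t_i-1}\succeq \boldsymbol{A}_{i-1}$, which is immediate from the definition of $\boldsymbol{M}_t$ since $\boldsymbol{M}_{t_i-1}$ accumulates all the rank--one updates encoded in $\boldsymbol{A}_{i-1}$ together with extra non-negative contributions coming from the rounds with $\omega_{e,t}<1$; combined with Lemma~\ref{lem:inv}, this lets me replace $\boldsymbol{M}_{t_i-1}^{-1}$ by $\boldsymbol{A}_{i-1}^{-1}$ inside each quadratic form. Next, because $\omega_{e,t_i}=1$ and $\|\boldsymbol{x}_e\|\leq 1$, each increment $\sigma^{-2}\boldsymbol{x}_e^{T}\boldsymbol{A}_{i-1}^{-1}\boldsymbol{x}_e$ lies in $[0,1/\sigma^{2}]$, so the elementary inequality $x\leq \log(1+x)/\log(1+1/\sigma^{2})$ on that interval together with the telescoping identity
\begin{equation*}
\sum_{i=1}^{q}\sum_{e\in\mathcal{E}}\mathbbm{1}(O_{t_i}(e))\,\log\!\bigl(1+\sigma^{-2}\boldsymbol{x}_e^{T}\boldsymbol{A}_{i-1}^{-1}\boldsymbol{x}_e\bigr)=\log\det\boldsymbol{A}_q
\end{equation*}
reduces the inner sum to $\log\det\boldsymbol{A}_q/\log(1+1/\sigma^{2})$.

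Finally, the trace--determinant inequality applied to $\boldsymbol{A}_q$, together with the trace estimate $\Tr(\boldsymbol{A}_q)\leq d+\sigma^{-2}qE^{*}\leq d+\sigma^{-2}TE^{*}$, gives $\log\det\boldsymbol{A}_q\leq d\log\!\bigl(1+TE^{*}/(d\sigma^{2})\bigr)$; substituting back into the Cauchy--Schwarz estimate yields exactly the stated bound. I expect the only nontrivial piece of bookkeeping, and hence the main obstacle, to be verifying the PSD dominance $\boldsymbol{M}_{t_i-1}\succeq \boldsymbol{A}_{i-1}$ cleanly: the subsequence $\{t_1,\dots,t_q\}$ skips every round in which some triggered edge has $\omega_{e,t}<1$, but those skipped rounds still deposit non-negative rank--one terms into $\boldsymbol{M}$, so the dominance survives index--shifting. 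Everything after that is a routine instance of the Abbasi-Yadkori potential lemma adapted to the combinatorial edge--level observation model used throughout the OIM literature.
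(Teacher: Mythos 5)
Your overall architecture (Cauchy--Schwarz plus an elliptical-potential argument on the auxiliary matrices $\boldsymbol{A}_i$, closed off by the trace--determinant inequality) is the same as the paper's, and the pieces you flag as routine — the PSD dominance $\boldsymbol{M}_{t_i-1}\succeq\boldsymbol{A}_{i-1}$ and the trace bound $\Tr(\boldsymbol{A}_q)\le d+\sigma^{-2}TE^*$ — are indeed handled exactly as you describe. The step that actually breaks is the one you present as a clean identity: the ``telescoping''
\begin{equation*}
\sum_{i=1}^{q}\sum_{e\in\mathcal{E}}\mathbbm{1}(O_{t_i}(e))\,\log\bigl(1+\sigma^{-2}\boldsymbol{x}_e^{T}\boldsymbol{A}_{i-1}^{-1}\boldsymbol{x}_e\bigr)=\log\det\boldsymbol{A}_q .
\end{equation*}
This holds only if one edge is observed per round. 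Under edge semi-bandit feedback a single round deposits up to $E^*$ rank-one updates, and all of your quadratic forms are evaluated at the \emph{beginning-of-round} matrix $\boldsymbol{A}_{i-1}$ rather than at the intermediate matrices obtained by inserting the edges one at a time. If you do telescope edge-by-edge you get $\log\det\boldsymbol{A}_q=\sum\log(1+\sigma^{-2}\boldsymbol{x}_e^{T}\boldsymbol{A}_{i,j-1}^{-1}\boldsymbol{x}_e)$ with intermediate matrices $\boldsymbol{A}_{i,j-1}\succeq\boldsymbol{A}_{i-1}$, hence $\boldsymbol{x}_e^{T}\boldsymbol{A}_{i,j-1}^{-1}\boldsymbol{x}_e\le\boldsymbol{x}_e^{T}\boldsymbol{A}_{i-1}^{-1}\boldsymbol{x}_e$ — so your sum \emph{upper-bounds} $\log\det\boldsymbol{A}_q$ rather than being bounded by it, which is the wrong direction. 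The paper's fix is to observe that each single-edge factor satisfies $1+\sigma^{-2}\boldsymbol{x}_e^{T}\boldsymbol{A}_{i-1}^{-1}\boldsymbol{x}_e\le\det(\boldsymbol{A}_i)/\det(\boldsymbol{A}_{i-1})$ and to take the product of the at most $E^*$ such factors per round, yielding $\prod_{i}\prod_{e}(1+z_{e,i}^2/\sigma^2)\le\det(\boldsymbol{A}_q)^{E^*}$ and therefore an extra factor $E^*$ in front of $d\log(1+TE^*/(d\sigma^2))$.

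Two consequences. First, once you insert that missing factor your Cauchy--Schwarz step gives $I_1\le\sqrt{TE^*}\cdot\sqrt{dE^*\log(1+\tfrac{TE^*}{d\sigma^2})/\log(1+\tfrac{1}{\sigma^2})}=E^*\sqrt{Td\log(\cdot)/\log(\cdot)}$, which is what the paper's own proof actually derives (the displayed lemma statement drops a $\sqrt{E^*}$ relative to the proof's last line; your uncorrected argument happens to reproduce the statement, but only because of the gap). Second, a smaller bookkeeping slip: the elementary inequality should be applied as $y\le\log(1+y/\sigma^2)/\log(1+1/\sigma^2)$ for $y=\boldsymbol{x}_e^{T}\boldsymbol{A}_{i-1}^{-1}\boldsymbol{x}_e\in[0,1]$; the version you state, $x\le\log(1+x)/\log(1+1/\sigma^2)$ for $x\in[0,1/\sigma^2]$, is false when $\sigma<1$ and in any case bounds $\sigma^{-2}\boldsymbol{x}_e^{T}\boldsymbol{A}_{i-1}^{-1}\boldsymbol{x}_e$ rather than the quantity $\boldsymbol{x}_e^{T}\boldsymbol{M}_{t_i-1}^{-1}\boldsymbol{x}_e$ appearing in $I_1$, leaving a stray $\sigma^2$ to track.
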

\begin{proof}
Define $z_{e,i} = \sqrt{\boldsymbol{x}_e^T \boldsymbol{A}_{i-1}^{-1}\boldsymbol{x}_e}$, and we have
\begin{align}
     \boldsymbol{A}_{i} = \boldsymbol{A}_{i-1} + \frac{1}{\sigma^2}\sum_{e \in \mathcal{E}}\mathbbm{1}(O_{t_{i}}(e))\boldsymbol{x}_e\boldsymbol{x}_e^T.   
\end{align}
For all $i \in \{1,2,\ldots,q\}$, $e \in \tilde{\mathcal{E}}_{t_{i}}$ ($e$ is observed at time step $t_{i}$), we have that
\begin{align}
    \det(\boldsymbol{A}_{i}) &\geq \det(\boldsymbol{A}_{i-1} + \frac{1}{\sigma^2}\boldsymbol{x}_e\boldsymbol{x}_e^T) \notag \\
    &= \det \left [\boldsymbol{A}_{i-1}^{\frac{1}{2}}\Big(\boldsymbol{I} + \frac{1}{\sigma}\boldsymbol{A}_{i-1}^{-\frac{1}{2}}\boldsymbol{x}_e\boldsymbol{x}_e^T\frac{1}{\sigma}\boldsymbol{A}_{i-1}^{-\frac{1}{2}}\Big)\boldsymbol{A}_{i-1}^{\frac{1}{2}}\right] \notag \\
    &= \det(\boldsymbol{A}_{i-1})\Big(1 + \frac{1}{\sigma^2}\boldsymbol{x}_e\boldsymbol{A}_{i-1}^{-1}\boldsymbol{x}_e^T\Big) \notag \\
    &= \det(\boldsymbol{A}_{u,i-1})\Big(1 + \frac{z_{e,i}^2}{\sigma^2}\Big).
\end{align}
Hence, we have
\begin{gather*}
    \det(\boldsymbol{A}_{i})^{|\tilde{\mathcal{E}}_{t_i}|} \geq \det(\boldsymbol{A}_{i-1})^{|\tilde{\mathcal{E}}_{t_i}|} \prod_{e \in \tilde{\mathcal{E}}_{t_i}}\Big(1 + \frac{z_{e,i}^2}{\sigma^2}\Big).
\end{gather*}

Denote $E^*$ as defined in Definition~\ref{def:Estar}, it is easy to obtain $|\tilde{\mathcal{E}}_{t_i}| \leq E^*$, $\forall i \in \{1,2,\ldots,q\}$ and we have
\begin{gather*}
    \det(\boldsymbol{A}_{t_q})^{E^*} \geq \det(\boldsymbol{A}_{0})^{E^*} \prod_{i=1}^{q} \prod_{e \in \tilde{\mathcal{E}}_{t_i}}\Big(1 + \frac{z_{e,i}^2}{\sigma^2}\Big).
\end{gather*}
$\boldsymbol{A}_{u,0} = \boldsymbol{I}$, $\forall u \in \mathcal{V}$. Besides, based on the algorithm, we have 
\begin{align}
    &\Tr(\boldsymbol{A}_{t_q}) = \Tr(\boldsymbol{I}+\frac{1}{\sigma^2}\sum_{i=1}^q\sum_{e \in \tilde{\mathcal{E}}_{t_i}}\boldsymbol{x}_e\boldsymbol{x}_e^T) \notag \\
    &= d + \frac{1}{\sigma^2}\sum_{i=1}^q\sum_{e \in \tilde{\mathcal{E}}_{t_i}}\norm{\boldsymbol{x}_e}^2 \leq d + \frac{TE^*}{\sigma^2},
\end{align}
the last inequality is based on the bound of $\norm{x_e} \leq 1$, $\forall e \in \mathcal{E}$ and $t_q \leq T$. According to the trace-determinant inequality, we have $\frac{1}{d}\Tr(A_{t_q}) \geq [\det(A_{t_q})]^{\frac{1}{d}}$, thus we have \cite{wen2017online}
\begin{align}
    \Big(1 + \frac{TE^*L^2}{d\sigma^2} \Big)^{dE^*} & \geq \Big[\frac{1}{d} \Tr(\boldsymbol{A}_{t_q})\Big]^{dE^*} \geq [\det(\boldsymbol{A}_{t_q})]^{E^*} \notag \\
    &\geq \prod_{i=1}^q \prod_{e \in \tilde{\mathcal{E}}_{t_{i-1}}}\Big(1 + \frac{z_{e,i}^2}{\sigma^2}\Big).
\end{align}
Take the logarithm on the both sides, we have
\begin{gather}
    dE^* \log(1+\frac{TE^*L^2}{d\sigma^2}) \geq \sum_{i=1}^q \sum_{e \in \tilde{\mathcal{E}}_{t_i}}\log\Big(1 + \frac{z_{e,i}^2}{\sigma^2}\Big),
\end{gather}
where $z_{e,i}^2 = \boldsymbol{x}_e^T\boldsymbol{A}_{i-1}^{-1}\boldsymbol{x}_e \leq \boldsymbol{x}_e^T\boldsymbol{A}_{u,0}^{-1}\boldsymbol{x}_e = \norm{x_e}^2 \leq 1$. And it is easy to prove that for any $y \in [0,1]$, we have $y \leq \frac{\log(1+\frac{y}{\sigma^2})}{\log(1+\frac{1}{\sigma^2})}$. Thus, we have $z_{e,i}^2 \leq \frac{\log(1+\frac{z_{e,i}^2}{\sigma^2})}{\log(1+\frac{1}{\sigma^2})}$. 

And then, it is satisfied that
\begin{align}
    \sum_{i=1}^q\sum_{e \in \tilde{\mathcal{E}}_{t_i}}z_{e,i}^2 \leq \frac{1}{\log(1+\frac{1}{\sigma^2})}&\sum_{i=1}^q\sum_{e \in \tilde{\mathcal{E}}_{t_i}}\log(1+\frac{z_{e,i}^2}{\sigma^2}) \notag \\
    &\leq \frac{dE^* \log(1+\frac{TE^*L^2}{d\sigma^2})}{\log(1+\frac{1}{\sigma^2})}.
\end{align}
Finally, with Cauthy-Schwarz inequality, we can obtain
\begin{align}
I_1 &= \sum_{t:\omega_{e,t} = 1}\sum_{e \in \mathcal{E}} \mathbbm{1}(O_t(e))\sqrt{\boldsymbol{x}_e^T \boldsymbol{M}_{t-1}^{-1}\boldsymbol{x}_e} \notag \\
&\leq \sum_{i=1}^q \sum_{e \in \mathcal{E}} \mathbbm{1}(O_{t_i}(e))\sqrt{\boldsymbol{x}_e^T \boldsymbol{M}_{t_{i-1}}^{-1}\boldsymbol{x}_e} \notag \\
&\leq \sum_{i=1}^q \sum_{e \in \mathcal{E}} \mathbbm{1}(O_{t_i}(e))\sqrt{\boldsymbol{x}_e^T \boldsymbol{A}_{i-1}^{-1}\boldsymbol{x}_e} \notag \\
&=  \sum_{i=1}^q\sum_{e \in \tilde{\mathcal{E}}_{t_{i-1}}}z_{e,i} \leq \sqrt{q E^*\Big[\sum_{i=1}^q\sum_{e \in \tilde{\mathcal{E}}_{t_i}}z_{e,i}^2\Big]}  \notag \\
&\leq E^*\sqrt{\frac{Td \log(1+\frac{TE^* L^2}{d\sigma^2})}{\log(1+\frac{1^2}{\sigma^2})}}.
\end{align}
That completes the proof of Lemma~\ref{lem:i_1}.
\end{proof}

For the second term $I_2$ defined in \eqref{eq:rgt-i}, according to the definition for weight $\omega_{e,t} < 1$, we have $\omega_{e,t} = \frac{\lambda}{\sqrt{\boldsymbol{x}_e^T\boldsymbol{M}_{t-1}^{-1}\boldsymbol{x}_e}}$, which implies that
\begin{align*}
    I_2 &= \sum_{t:\omega_{e,t}<1}\sum_{e \in \tilde{\mathcal{E}}_t} \sqrt{\boldsymbol{x}_e^T \boldsymbol{M}_{t-1}^{-1}\boldsymbol{x}_e} \notag \\
    &= \sum_{t:\omega_{e,t}<1}\sum_{e \in \tilde{\mathcal{E}}_t} \omega_{e,t}\boldsymbol{x}_e^T \boldsymbol{M}_{t-1}^{-1}\boldsymbol{x}_e/\lambda,
\end{align*}
where the second equation holds due to the definition of $\omega_{e,t}$. Now, we assume the rounds with weight $\omega_{e,t} < 1$ can be listed as $\{\tau_1,\ldots,\tau_k\}$. And we introduce the vector $\tilde{\boldsymbol{x}}_{e,i} = \sqrt{\omega_{e,\tau_i}}\boldsymbol{x}_e$ if at time step $\tau_i$, edge $(e$ is observable and matrix $\tilde{\boldsymbol{M}}_{i}$ as
\begin{gather}
    \tilde{\boldsymbol{M}}_{i} =  \boldsymbol{I} + \frac{1}{\sigma^2}\sum_{j=1}^{i}\omega_{e,\tau_j}\boldsymbol{x}_e\boldsymbol{x}_e^T = I + \sum_{j=1}^{i}\tilde{\boldsymbol{x}}_{e,i}\tilde{\boldsymbol{x}}_{e,i}^T.
\end{gather}
According to Lemma~\ref{lem:inv}, we also have $(\tilde{\boldsymbol{M}}_{i})^{-1} \succeq \boldsymbol{M}_{\tau_i}^{-1}$. Therefore, for each $i \in \{1,2,\ldots,k\}$, we have
\begin{gather}
    \boldsymbol{x}_e^T(\tilde{\boldsymbol{M}}_{i})^{-1} \boldsymbol{x}_e^T \geq  \boldsymbol{x}_e^T\boldsymbol{M}_{\tau_i}^{-1} \boldsymbol{x}_e^T.
\end{gather}
Follow the same process in the proof of Lemma~\ref{lem:i_1}, define $\tilde{z}_{e,i} = \sqrt{\tilde{\boldsymbol{x}}_{e,i}^T(\tilde{\boldsymbol{M}}_{i-1})^{-1}\tilde{\boldsymbol{x}}_{e,i}}$ and we have
\begin{gather}
    \tilde{\boldsymbol{M}}_{i} =  \tilde{\boldsymbol{M}}_{i-1} + \frac{1}{\sigma^2}\sum_{e\in \mathcal{E}}\mathbbm{1}(O_{\tau_i}(e))\tilde{\boldsymbol{x}}_{e,i}\tilde{\boldsymbol{x}}_{e,i}^T
\end{gather}
Similarly, for all $e \in \tilde{\mathcal{E}}_{\tau_i}$ and $i \in \{1,2,\ldots,k\}$, we can easily obtain
\begin{gather}
    \det(\tilde{\boldsymbol{M}}_{i}) \geq \det(\tilde{\boldsymbol{M}}_{i-1})(1+\frac{(\tilde{z}_{e,i})^2}{\sigma^2}),
\end{gather}
and 
\begin{gather*}
    \det(\tilde{\boldsymbol{M}}_{i})^{|\tilde{\mathcal{E}}_{\tau_i}|} \geq \det(\tilde{\boldsymbol{M}}_{i-1})^{|\tilde{\mathcal{E}}_{\tau_i}|} \prod_{e \in \tilde{\mathcal{E}}_{\tau_{i-1}}}\Big(1 + \frac{(\tilde{z}_{e,i-1})^2}{\sigma^2}\Big).
\end{gather*}
Follow the same process in Lemma~\ref{lem:i_1},
it is satisfied that
\begin{align}
    \sum_{i=1}^k\sum_{e \in \tilde{\mathcal{E}}_{\tau_i}}(\tilde{z}_{e,i})^2 &\leq \frac{1}{\log(1+\frac{1}{\sigma^2})}\sum_{i=1}^k\sum_{e\in \tilde{\mathcal{E}}_{\tau_i}}\log(1+\frac{(\tilde{z}_{e,i})^2}{\sigma^2}) \notag \\
    &\leq \frac{dE^* \log(1+\frac{TE^*}{d\sigma^2})}{\log(1+\frac{1}{\sigma^2})}.
\end{align}
And then
\begin{align}
    I_2 &= \sum_{t:\omega_{e,t} < 1}\sum_{e\in \mathcal{E}} \mathbbm{1}(O_t(e))\sqrt{\boldsymbol{x}_e^T \boldsymbol{M}_{t-1}^{-1}\boldsymbol{x}_e} \notag \\
    &= \sum_{t:\omega_{e,t} < 1}\sum_{e\in \mathcal{E}} \mathbbm{1}(O_t(e))\omega_{e,t}\boldsymbol{x}_e^T \boldsymbol{M}_{t-1}^{-1}\boldsymbol{x}_e/\lambda \notag \\
    &\leq \sum_{i=1}^k \sum_{e \in \mathcal{E}} \mathbbm{1}(O_{\tau_i}(e))\omega_{e,\tau_i}\boldsymbol{x}_e^T M_{\tau_{i-1}}^{-1}\boldsymbol{x}_e/\lambda \notag \\
    &\leq  \sum_{i=1}^k \sum_{e \in \mathcal{E}} \mathbbm{1}(O_{\tau_i}(e))\omega_{e,\tau_i}\boldsymbol{x}_e^T (\tilde{\boldsymbol{M}}_{i-1})^{-1}\boldsymbol{x}_e/\lambda \notag \\
    &= \sum_{i=1}^k \sum_{e \in \mathcal{E}} \mathbbm{1}(O_{\tau_i}(e))\tilde{\boldsymbol{x}}_{e,i}^T \tilde{\boldsymbol{M}}_{i-1}^{-1}\tilde{\boldsymbol{x}}_{e,i}/\lambda \notag  \\
    &= \sum_{i=1}^k\sum_{e \in \tilde{\mathcal{E}}_{\tau_i}} \frac{\tilde{z}_{e,i}^2}{\lambda} \leq \frac{dE^* \log(1+\frac{TE^*}{d\sigma^2})}{\lambda \log(1+\frac{1}{\sigma^2})}
\end{align}
Therefore, combine with the bound of $I_1$ in Lemma~\ref{lem:i_1}, we can obtain
\begin{align*}
&\sum_{t=1}^T\mathbb{E}\Bigl\{[f_{D,\boldsymbol{P}}(\mathcal{S}^{opt}) - \frac{1}{\alpha \gamma} f_{D,\boldsymbol{P}}(\mathcal{S}_t)]\Big|\xi_{t-1} \Bigr\}\notag \\
&\leq \frac{2B\beta}{\alpha \gamma}\Big (E^*\sqrt{\frac{Td \log(1+\frac{TE^*}{d\sigma^2})}{\log(1+\frac{1^2}{\sigma^2})}} + \frac{dE^* \log(1+\frac{TE^*}{d\sigma^2})}{\lambda \log(1+\frac{1}{\sigma^2})}\Big ).
\end{align*}

Select $\lambda = \frac{\sqrt{d}}{E^cC}$, the $\alpha \gamma$-scaled regret is upper bounded 
\begin{align}
     R^{\alpha \gamma}(T) &\leq \sum_{t=1}^T\mathbb{E}\Bigl\{[f_{D,\boldsymbol{P}}(\mathcal{S}^{opt}) - \frac{1}{\alpha \gamma} f_{D,\boldsymbol{P}}(\mathcal{S}_t)]\Big|\xi_{t-1} \Bigr\} \notag \\
     &\quad \quad + \sum_{t=1}^T\mathbb{P}(\bar{\xi}_{t-1})(n-K) \notag \\
     &\leq O(dBE^*\sqrt{T}\log(nT) + BE^*E^cCd\log(nT))
\end{align}
\end{proof}

\bibliographystyle{IEEEtran}
\bibliography{references} 
\end{document}